\newtheorem{thm}{Theorem}
\newtheorem{lem}{Lemma}
\newtheorem{remark}{Remark}
\newcommand{\fnn}{{\mathbb F}_{2^{n}}}
\newcommand{\fn}{{\mathbb F}_{2^{2m}}}
\newcommand{\fm}{{\mathbb F}_{2^m}}
\newcommand{\bm}{\overline{b}}
\newcommand{\ta}{\overline{\theta}}
\begin{document}
\title{On the Niho type locally-APN power functions and their boomerang spectrum}
\author{Xi Xie, Sihem Mesnager, Nian Li, Debiao He, Xiangyong Zeng
\thanks{X. Xie, N. Li, and X. Zeng are with the Hubei Key Laboratory of Applied Mathematics, Faculty of Mathematics and
Statistics, Hubei University, Wuhan, China.
S. Mesnager is with the Department of Mathematics, University of Paris VIII, F-93526 Saint-Denis, Paris, France, the Laboratory Geometry, Analysis and Applications, LAGA, University Sorbonne Paris Nord CNRS,
UMR 7539, F-93430, Villetaneuse, France, and also with the Telecom Paris, 91120 Palaiseau, France.
D. He is with the School of Cyber Science and Engineering, Wuhan University,
Wuhan, China.
Email: xi.xie@aliyun.com,  smesnager@univ-paris8.fr, nian.li@hubu.edu.cn, hedebiao@whu.edu.cn, xzeng@hubu.edu.cn.}
}
\date{}
\maketitle
\begin{quote}
{{\bf Abstract:}

In this article, we focus on the concept of locally-APN-ness (``APN" is  the abbreviation of  the well-known notion of Almost Perfect Nonlinear)  introduced by Blondeau, Canteaut, and Charpin, which makes the corpus of S-boxes somehow larger regarding their differential uniformity and, therefore, possibly, more suitable candidates against the differential attack (or their variants). Specifically, given  two coprime positive integers $m$ and $k$ such that $\gcd(2^m+1,2^k+1)=1$, we investigate the locally-APN-ness property of an infinite family of Niho type power functions in the form $F(x)=x^{s(2^m-1)+1}$ over the finite field $\fn$ for $s=(2^k+1)^{-1}$, where $(2^k+1)^{-1}$ denotes the multiplicative inverse modulo $2^m+1$.

By employing finer studies of the number of solutions of certain equations over finite fields (with even characteristic) as well as some subtle manipulations of solving some equations, we prove that $F(x)$ is locally APN and determine its differential spectrum. It is worth noting that computer experiments show that this class of locally-APN power functions covers all Niho type locally-APN power functions for $2\leq m\leq10$. In addition, we also determine the boomerang spectrum of $F(x)$ by using its differential spectrum, which particularly generalizes a recent result by Yan, Zhang, and Li.
}

{ {\bf Keywords:}} Power function, Differential spectrum, APN function, Locally-APN function,  Boomerang spectrum, Block cipher, Symmetric cryptogrphy.
\end{quote}

\section{Introduction}

A substitution box (S-box) in a block cipher is a mapping that takes $n$ binary inputs and whose image is a binary $m$-tuple for some positive integers $n$ and $m$, which is usually the only nonlinear element of the most modern block ciphers. Therefore, it is significant to employ S-boxes with good cryptographic properties to resist various attacks.

Differential attack, introduced by Biham and Shamir \cite{BS}, is one of the most efficient attacks on a block cipher. For an $n$-bit S-box $F(x)$, i.e., a mapping from the finite field $\fnn$ to $\fnn$, the properties for differential propagations of $F$ are captured in the {\rm DDT} (Difference Distribution Table) of $F$ which are given by
$${\rm DDT}_{F}(a,\,b)=|\{x\in\fnn: F(x)+F(x+a)=b\}|$$
for any $a,\,b\in\fnn$.
The differential uniformity of $F$ is defined as
$$\delta(F)=\max_{a,b\in\mathbb{F}_{2^n},\,a\ne0}{\rm DDT}_{F}(a,\,b).$$
Differential uniformity is an important concept in cryptography introduced by Nyberg \cite{Nyberg-93,SM-NK} as it quantifies the degree of security of the cipher concerning the differential attack if $F$ is involved as an S-box in such cipher.
It is easy to see that $\delta(F)$ is always even and $\delta(F)\geq2$.
The function $F$ with $\delta(F)=2$ is called Almost Perfect Nonlinear (APN) and offers maximal resistance to differential attacks.
The differential spectrum of $F$, defined as the multi-set $\{{\rm DDT}_{F}(a,\,b): a\in\fnn^*,\,b\in\fnn\}$,
may influence its security regarding some variants of differential cryptanalysis \cite{BCC}. Those functions are very important in symmetric cryptography since they contribute to an optimal resistance against differential cryptanalysis, a powerful attack employed against block ciphers. A lot of attention and efforts have been made, as can be seen notably in the nice and complete Chapter 11 in the recent book \cite{Carlet-book-2021}.

Another important cryptanalysis technique on block ciphers is the boomerang attack, introduced by Wagner in \cite{W}, which can be considered an extension of the classical differential attack \cite{BS}. Analogous to the ${\rm DDT}$ concerning the differential attack, Cid et al. \cite{CHPSS} introduced a tool called Boomerang Connectivity Table (BCT) to analyze the boomerang attack of block ciphers. Let $F: \fnn\rightarrow \fnn$ be a permutation. The entries of the {\rm BCT} of $F$ are given by
$${\rm BCT}_{F}(a,\,b)=\big|\big\{x\in\fnn : F^{-1}(F(x)+b)+F^{-1}(F(x+a)+b)=a\big\}\big|$$
for any $a,\,b\in\fnn$. Then the boomerang uniformity of $F$ defined as
$$\beta(F)=\max_{a,b\in\mathbb{F}_{2^n}^*}{\rm BCT}_{F}(a,\,b),$$
was introduced by Boura and Canteaut \cite{BC} to quantify the resistance of $F$ against the boomerang attack. It was known in \cite{CHPSS} that $\beta(F)\geq\delta(F)$, and if $\delta(F)=2$, then $\beta(F)=2$, hence APN permutations offer maximal resistance to both differential and boomerang attacks. Similarly, the boomerang spectrum of $F$, defined as the multi-set $\{{\rm BCT}_{F}(a,\,b): a,b\in\fnn^*\}$, was introduced to estimate the resistance of a block cipher against some variants of boomerang cryptanalysis \cite{JLLQ}. The reader can refer to the following non-exhaustive list of references \cite{BCC,BCC1,BP,BC,EM,HPS2,KMCLLJ,MTX} for some results on the permutations with low differential and boomerang uniformities. A recent survey on some generalizations of the differential and boomerang uniformities is presented in \cite{MMM}. APN permutations offer maximal resistance to differential and boomerang attacks, but there are extremely difficult to construct despite many efforts and recent advances. The high importance of such families of permutations for real applications comes from the difficulty of finding APN permutations in even dimension (known as the Big APN Problem \cite{BDMW}), making this topic still exciting nowadays.

Power functions are preferred candidates for S-boxes since they have simple algebraic forms and usually have a lower implementation cost in hardware. Their particular algebraic structure makes the determination of their differential properties easier to handle. For a power function $F(x)=x^d$ over $\fnn$, where $1\le d\le 2^n-2$ is a positive integer.  One  can see that ${\rm DDT}_F(a,\,b)={\rm DDT}_F(1,\,b/a^d)$ for all $a\in\mathbb{F}_{2^n}^*,\,b\in\mathbb{F}_{2^n}$ and ${\rm BCT}_F(a,\,b)={\rm BCT}_F(1,\,b/a^d)$ for all $a,\,b\in\mathbb{F}_{2^n}^*$.
That is to say, the differential (resp. boomerang) spectrum of $F(x)$ is completely determined by the values of ${\rm DDT}_F(1,\,b)$ (resp. ${\rm BCT}_F(1,\,b)$) as $b$ runs through $\mathbb{F}_{2^n}$ (resp. $\mathbb{F}_{2^n}^*$). Therefore, the differential spectrum of a power function $F(x)$ with differential uniformity $\delta(F)$ is defined as
$$\mathbb{DS}_F=\{\omega_i>0:0\leq i\leq\delta(F)\},$$
where $\omega_i$ denotes the number of output differences $b\in\fnn$ that occur $i$ times, that is, $\omega_i=|\{b\in\fnn: {\rm DDT}_F(1,b)=i\}|$ for $0\leq i\leq\delta(F)$.
Similarly, the boomerang spectrum of a power function $F(x)$ with boomerang uniformity $\beta(F)$ is simply defined as
$$\mathbb{BS}_F=\{\nu_i>0:0\leq i\leq\beta(F)\},$$
where $\nu_i=|\{b\in\fnn^*: {\rm BCT}_F(1,b)=i\}|$ for $0\leq i\leq\beta(F)$.
As a generalization of the APN-ness  property, Blondeau, Canteaut, and  Charpin \cite{BCC1} introduced the notion of locally APN power functions. A power function $F(x)$ over $\fnn$ is said to be locally-APN if $\max \{{\rm DDT}_F(1,b): b\in \fnn \backslash \mathbb{F}_{2}\} =2$. To our knowledge, very few locally APN power functions are known in the literature.
Typically, it is also hard to determine the differential and boomerang spectrum for a given function. The infinite families of power functions with known differential and boomerang spectrums are listed in Table \ref{differential-table} and Table \ref{boomerang-table}, respectively.

\begin{table}[!htb]\footnotesize
\caption{The power function $F(x)=x^d$ over $\fnn$ for which its differential spectrum is known}  \label{differential-table}
\renewcommand\arraystretch{0.8}
\setlength\tabcolsep{2pt}
\centering
\begin{tabular}{|c|c|c|c|c|}
\hline No.&  $d$                & Condition                   &  $\delta(F)$    & Refs.                  \\ \hline\hline
  1  &  $2^t+1$            &    $\gcd(t,n)=s$            &    $2^s$       &\cite{BCC,EM}            \\ \hline
  2  &  $2^{2t}-2^t+1$     & $\gcd(t,n)=s$, $n/s$ odd    &    $2^s$        &\cite{BCC}            \\ \hline
  3  &  $2^n-2$            &$n\geq 2$                    & $2$ or $4$ (locally-APN)     &\cite{BCC,EM}            \\ \hline
  4  & $2^{2k}+2^k+1$      &$n=4k$                       & 4               &\cite{BCC,XY}       \\ \hline
  5  & $2^t-1$             &$t=3,n-2$                    & 6 or 8          &\cite{BCC1}             \\ \hline
  6  &$2^t-1$             &$t=n/2, n/2+1$, $n$ even     & $2^{n/2}-2$ or $2^{n/2}$ (locally-APN)  &\cite{BCC}   \\ \hline
  7  & $2^t-1$             &$t=(n-1)/2,(n+3)/2$, $n$ odd & $6$ or $8$      &\cite{BP}             \\ \hline
  8  & $2^{3k}+2^{2k}+2^{k}-1$ & $n=4k$                  & $2^{2k}$        &\cite{LWZT}             \\ \hline
  9  & $2^m+2^{(m+1)/2}+1$ &$n=2m$, $m\geq5$ odd         & $8$             &\cite{XYY}             \\ \hline
  10 & $2^{m+1}+3$         &$n=2m$, $m\geq5$ odd         & $8$             &\cite{XYY}             \\\hline
  11 & $k(2^m-1)$         & $n=2m$, $\gcd(k,2^m+1)=1$        &$2^m-2$ (locally-APN)              &   \cite{HLXZT}           \\\hline
  12 & $(2^m-1)/(2^k+1)+1$        & $n=2m$, $\gcd(k,m)=1$  &$2^m$ (locally-APN)   & This paper \\\hline
\end{tabular}
\end{table}

\begin{table}[!htb]\footnotesize
\caption{Power functions $F(x)=x^d$ over $\fnn$ with known boomerang spectrum}  \label{boomerang-table}
\renewcommand\arraystretch{0.8}
\setlength\tabcolsep{2pt}
\centering
\begin{tabular}{|c|c|c|c|c|c|}
\hline No.&  $d$      & Condition                               & $\beta(F)$    & Refs.                \\ \hline\hline
  1   & $2^n-2$       & $n\geq 2$                                 & $4$ or $6$    &\cite{BC,EM,JLLQ}           \\ \hline
  2  &  $2^t+1$       &$s=\gcd(t,n)$                            & $2^s$ or $2^s(2^s-1)$   &\cite{BC,EM,HPS2} \\ \hline
  3  &  $2^{m+1}-1$  &$n=2m$, $m>1$                             & $2^m+2$       &\cite{YZZ}   \\\hline
  4  &  $k(2^m-1)$  & $n=2m$, $\gcd(k,2^m+1)=1$        & $2$ or $4$       & \cite{HLXZT}  \\\hline
  5  &  $(2^m-1)/(2^k+1)+1$  &$n=2m$, $\gcd(k,m)=1$                             & $2^m+2$       &This paper   \\\hline
\end{tabular}
\end{table}

In this paper, we study the locally-APN-ness property of the Niho type power functions over $\fnn$, namely, the power functions of the form $F(x)=x^{s(2^m-1)+1}$, where $n=2m$ and $1\leq s\leq 2^m$. Concretely, for a positive integer $k$ satisfying $\gcd(k,\,m)=1$ and $\gcd(2^k+1,\,2^m+1)=1$, we prove that $F(x)=x^{s(2^m-1)+1}$ with $s=(2^k+1)^{-1}$ is a locally-APN function over $\fn$, where $(2^k+1)^{-1}$ denotes the multiplicative inverse modulo $2^m+1$. Our computer experiments show that this class of locally-APN power functions covers all Niho type locally-APN power functions for $2\leq m\leq10$. Moreover, by carrying out some finer manipulations of solving certain equations over finite fields, we completely determine both the differential spectrum of $F(x)$ and the boomerang spectrum of $F(x)$, which generalizes the results in \cite{BCC} and \cite{YZZ} respectively.

The rest of this paper is organized as follows. Section \ref{prel} introduces the preliminaries. Section \ref{diff} ultimately determines the differential spectrum of $F(x)=x^{s(2^m-1)+1}$ with $s=(2^k+1)^{-1}$,  leading to the fact that $F$ is actually  locally-APN. Section \ref{boom} determines the
 boomerang spectrum of $F(x)$ based on its differential spectrum. Section \ref{conc} concludes this study.

\section{Preliminaries}\label{prel}
Throughout this paper, $|E|$ denotes the cardinality of a finite set $E$. In addition, let  $n$ be a positive integer and $\mathbb F_{2^n}$ be the finite field (of characteristic $2$) of order $2^n$. We denote by $\mathbb F_{2^n}^*$ the multiplicative cyclic group of non-zero elements of $\mathbb F_{2^n}$.  The (absolute) trace function ${\rm Tr}_1^n:\mathbb F_{2^n}\longrightarrow \mathbb F_2$ is defined by ${\rm Tr}_1^n(x)=\sum_{i=0}^{n-1} x^{2^{i}}$ for all $x\in\mathbb F_{2^n}$.  A positive integer $d$ is called a Niho exponent with respect to the finite field $\fn$ if $d\equiv 2^i \,({\rm mod}\, 2^m-1)$ for some $i<2m$. When $i= 0$, the integer $d$ is then called a normalized Niho exponent. For simplicity, denote the conjugate of $x\in\fn$ over $\fm$ by $\overline{x}$, i.e., $\overline{x}=x^{2^m}$. The unit circle of $\fn$ is defined as follows:
$$\mu_{2^m+1}:=\{v\in\fn: v\overline{v}=1\}.$$
Note that $\mu_{2^m+1}\cap\fm=\{1\}$. It is well-known that each $x\in\fn^*$ can be uniquely written as $x=uv$ for some $u\in\fm^*$ and $v\in\mu_{2^m+1}$.

The following lemmas are helpful for the subsequent sections.

\begin{lem}{\rm(\cite{R})}\label{lem.u}
For a positive integer $m$ and any element $\theta\in\fn\backslash\fm$, the mapping $\varphi: x \rightarrow (x+\overline{\theta})/(x+\theta)$ from $\fm$ to $\mu_{2^m+1}\backslash\{1\}$ is a bijection.
\end{lem}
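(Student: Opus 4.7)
The plan is to prove the lemma by establishing three things: well-definedness of $\varphi$ into the claimed codomain, injectivity, and then a cardinality count to upgrade injectivity to bijectivity. Since $|\fm|=2^m$ and $|\mu_{2^m+1}\setminus\{1\}|=(2^m+1)-1=2^m$, the sets have the same finite cardinality, so it suffices to verify that $\varphi$ is a well-defined injection.

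First I would check well-definedness. For any $x\in\fm$, the denominator $x+\theta$ is nonzero because $\theta\notin\fm$, so $\varphi(x)$ lies in $\fn$. To show $\varphi(x)\in\mu_{2^m+1}$, I would compute $\varphi(x)\cdot\overline{\varphi(x)}$: using that conjugation $y\mapsto\overline{y}=y^{2^m}$ is a field automorphism of $\fn$, that $\overline{x}=x$ for $x\in\fm$, and that $\overline{\overline{\theta}}=\theta$, I get $\overline{\varphi(x)}=(x+\theta)/(x+\overline{\theta})$, whence $\varphi(x)\overline{\varphi(x)}=1$. Finally, $\varphi(x)=1$ would force $x+\overline{\theta}=x+\theta$, i.e.\ $\theta=\overline{\theta}$, contradicting $\theta\notin\fm$. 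So the image lies in $\mu_{2^m+1}\setminus\{1\}$.

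Next I would verify injectivity. Suppose $\varphi(x_1)=\varphi(x_2)$ for some $x_1,x_2\in\fm$. Cross-multiplying and expanding $(x_1+\overline{\theta})(x_2+\theta)=(x_2+\overline{\theta})(x_1+\theta)$ in characteristic $2$ collapses the symmetric terms $x_1x_2$ and $\theta\overline{\theta}$, leaving $(x_1+x_2)(\theta+\overline{\theta})=0$. Since $\theta\neq\overline{\theta}$, this forces $x_1=x_2$.

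There is essentially no hard step here; the lemma is a standard parametrization of the unit circle by an affine line via a Möbius-type map. The only subtlety worth flagging is keeping the characteristic-$2$ bookkeeping straight when canceling cross terms in the injectivity computation and remembering to exclude the value $1$ from the codomain (which corresponds, under the inverse parametrization, to the ``point at infinity'' $x=\infty$ that $\varphi$ cannot reach from $\fm$). Once these are in place, the matching cardinalities immediately yield bijectivity.
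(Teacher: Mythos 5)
Your proof is correct: the paper cites this lemma from Rosendahl's dissertation without reproducing a proof, and your argument (well-definedness into $\mu_{2^m+1}\setminus\{1\}$ via the conjugation automorphism, injectivity by cross-multiplying in characteristic $2$, and the cardinality count $|\fm|=|\mu_{2^m+1}\setminus\{1\}|=2^m$) is exactly the standard one. Nothing is missing.
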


\begin{lem}{\rm(\cite{TZLH})}\label{lem.eq.x^2}
Let $n=2m$ be an even positive integer and $a,b\in\mathbb{F}_{2^n}^*$. Then
$x^2+ax+b=0$ has two solutions in $\mu_{2^m+1}$ if and only if $b=a^{1-2^m}$ and ${\rm Tr}_1^m(b/a^2)=1$. \end{lem}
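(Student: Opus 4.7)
The plan is to reduce the quadratic to an Artin-Schreier-type equation over $\fm$ by a linear substitution, and then use the Galois structure of the reduced equation together with the unit-circle condition. For the forward direction, I would start from Vieta's formulas: if $x_1,x_2\in\mu_{2^m+1}$ are the two solutions, then $x_1+x_2=a$, $x_1x_2=b$, and $x_i^{2^m}=x_i^{-1}$. Applying the Frobenius $y\mapsto y^{2^m}$ to $a$ yields $a^{2^m}=x_1^{-1}+x_2^{-1}=(x_1+x_2)/(x_1x_2)=a/b$, so the necessity of $b=a^{1-2^m}$ follows immediately.

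For the remainder of the equivalence, I would substitute $y=x/a$ to rewrite the equation as $y^2+y+c=0$ with $c=b/a^2$. A short check under $b=a^{1-2^m}$ shows $c=a^{-1-2^m}=(a\overline{a})^{-1}\in\fm^*$, so the reduced equation is a genuine Artin-Schreier-type equation over $\fm$. Its two roots therefore lie in $\fm$ precisely when ${\rm Tr}_1^m(c)=0$, and otherwise they lie in $\fn\setminus\fm$ as a Galois conjugate pair with sum $1$, so each root satisfies $y^{2^m}=y+1$.

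The heart of the proof is transferring this dichotomy back to $x=ay$. Compute $x\overline{x}=a\overline{a}\cdot y y^{2^m}$, hence $x\in\mu_{2^m+1}$ iff $y y^{2^m}=(a\overline{a})^{-1}=c$. If $y\in\fm$, then $y y^{2^m}=y^2=y+c$ by the equation, which equals $c$ only when $y=0$, impossible since $c\neq 0$. If $y\in\fn\setminus\fm$, then $y^{2^m}=y+1$, so $y y^{2^m}=y^2+y=c$ holds automatically. Consequently, both roots of $x^2+ax+b=0$ lie in $\mu_{2^m+1}$ exactly when the corresponding roots $y$ lie in $\fn\setminus\fm$, which by Artin-Schreier is exactly ${\rm Tr}_1^m(b/a^2)=1$.

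The main subtlety I anticipate is keeping careful track of which trace controls which condition. The absolute trace ${\rm Tr}_1^n(b/a^2)$ vanishes automatically once $c\in\fm$, so solvability over $\fn$ is free and provides no constraint; all the genuine information is encoded in the relative trace ${\rm Tr}_1^m$, which decides whether the roots live in $\fm$ (in which case the unit-circle condition fails) or in $\fn\setminus\fm$ (in which case it is automatic). Once this dichotomy is pinpointed, the remaining steps are routine verifications.
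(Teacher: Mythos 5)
Your proof is correct and complete: the Vieta/Frobenius computation gives the necessity of $b=a^{1-2^m}$, the substitution $y=x/a$ reduces to the Artin--Schreier equation $y^2+y+c=0$ with $c=b/a^2=(a\overline{a})^{-1}\in\mathbb{F}_{2^m}^*$, and your dichotomy ($y\in\mathbb{F}_{2^m}$ forces $x\notin\mu_{2^m+1}$, while $y^{2^m}=y+1$ forces $x\in\mu_{2^m+1}$) correctly transfers the relative-trace condition to the unit-circle condition in both directions. The paper itself states this lemma without proof, citing \cite{TZLH}; your argument is essentially the standard one found there, so there is nothing further to reconcile.
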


\begin{lem}{\rm(\cite{MKCLG})}\label{lem.eq1}
Let $n,\,r$ be positive integers such that $\gcd(n,\,r)=1$. For any $a\in\mathbb{F}_{2^n}$, the equation
$x^{2^r}+x=a$
has either $0$ or $2$ solutions in $\mathbb{F}_{2^n}$. Moreover, it is solvable with two solutions in $\mathbb{F}_{2^n}$ if and only if ${\rm Tr}_1^n(a)=0$.
\end{lem}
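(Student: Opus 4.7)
The plan is to view the map $L: \fnn \to \fnn$ defined by $L(x) = x^{2^r} + x$ as an $\mathbb{F}_2$-linear endomorphism (the $2^r$-th power map commutes with addition in characteristic $2$), and deduce the structure of its fibers from the kernel and the image.

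First I would determine $\ker L = \{x \in \fnn : x^{2^r} = x\}$, which is the subfield fixed by the $2^r$-Frobenius, namely $\mathbb{F}_{2^{\gcd(n,r)}}$. Under the hypothesis $\gcd(n,r) = 1$ this subfield collapses to $\mathbb{F}_2$, so $|\ker L| = 2$. By rank--nullity over $\mathbb{F}_2$, every non-empty fiber of $L$ has exactly $2$ elements and $|{\rm Im}\,L| = 2^{n-1}$. This establishes the $0$-or-$2$ dichotomy for the number of solutions of $x^{2^r} + x = a$.

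To pin down which $a$ lie in the image, I would show the inclusion ${\rm Im}\,L \subseteq \ker {\rm Tr}_1^n$ by the direct computation ${\rm Tr}_1^n(x^{2^r} + x) = {\rm Tr}_1^n(x^{2^r}) + {\rm Tr}_1^n(x) = 0$, using the invariance of the absolute trace under the Frobenius $x \mapsto x^{2^r}$ (which follows from $x^{2^n} = x$ by reindexing the sum defining ${\rm Tr}_1^n$). Since ${\rm Tr}_1^n$ is a nonzero $\mathbb{F}_2$-linear functional, $|\ker {\rm Tr}_1^n| = 2^{n-1}$; matching cardinalities forces ${\rm Im}\,L = \ker {\rm Tr}_1^n$, which yields the trace criterion for solvability.

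Since the result is an Artin--Schreier-type statement, no serious obstacle arises. The only point requiring care is the kernel computation, where the coprimality assumption $\gcd(n,r) = 1$ is exactly what forces $\ker L = \mathbb{F}_2$ rather than a larger subfield of $\fnn$; once this is in hand, the dimension comparison with $\ker {\rm Tr}_1^n$ finishes the proof.
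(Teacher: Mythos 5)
Your proof is correct and complete: the paper only cites this lemma from \cite{MKCLG} without reproducing a proof, and your argument --- treating $x\mapsto x^{2^r}+x$ as an $\mathbb{F}_2$-linear map, identifying the kernel with $\mathbb{F}_{2^{\gcd(n,r)}}=\mathbb{F}_2$, and matching $|{\rm Im}\,L|=2^{n-1}$ against $|\ker{\rm Tr}_1^n|=2^{n-1}$ --- is the standard and essentially the intended derivation. No gaps.
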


{\begin{lem}{\rm(\cite{WW})}\label{lem.theta}
Let $n,\,k$ be positive integers and $x,\,y$ be two elements of $\fnn$. Then
$$x^{2^k+1}+y^{2^k+1}=(x+y)^{2^k+1}+\sum_{i=0}^{k-1}(xy)^{2^i}(x+y)^{2^k-2^{i+1}+1}.$$
\end{lem}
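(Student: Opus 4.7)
The plan is to establish this polynomial identity in characteristic $2$ by induction on $k$. Both sides are polynomials in $x$ and $y$ with coefficients in $\mathbb{F}_2$, so it suffices to prove the identity at the level of polynomials in $\mathbb{F}_2[x,y]$ and then specialize to any pair $x,y\in\fnn$.

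For the base case $k=1$, a direct expansion in characteristic $2$ gives $(x+y)^3 = x^3 + x^2y + xy^2 + y^3$, hence the right-hand side equals $(x+y)^3 + xy(x+y) = x^3 + x^2y + xy^2 + y^3 + x^2y + xy^2 = x^3 + y^3$, matching the left-hand side.

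For the inductive step, the key auxiliary identity I would use is
$$x^{2^{k+1}+1} + y^{2^{k+1}+1} \;=\; \bigl(x^{2^k+1}+y^{2^k+1}\bigr)(x+y)^{2^k} + (xy)^{2^k}(x+y),$$
which follows from expanding $\bigl(x^{2^k+1}+y^{2^k+1}\bigr)\bigl(x^{2^k}+y^{2^k}\bigr)$ and invoking the Frobenius identity $(x+y)^{2^k}=x^{2^k}+y^{2^k}$: the cross terms collapse as $x^{2^k+1}y^{2^k}+x^{2^k}y^{2^k+1} = (xy)^{2^k}(x+y)$. Substituting the inductive hypothesis for $x^{2^k+1}+y^{2^k+1}$ on the right and distributing $(x+y)^{2^k}$ through the sum yields
$$(x+y)^{2^{k+1}+1} + \sum_{i=0}^{k-1}(xy)^{2^i}(x+y)^{2^{k+1}-2^{i+1}+1} + (xy)^{2^k}(x+y).$$
Because $(xy)^{2^k}(x+y) = (xy)^{2^k}(x+y)^{2^{k+1}-2^{k+1}+1}$, this leftover term is exactly the $i=k$ summand, so the index range extends to $0\le i\le k$ and the induction closes.

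The only delicate point is bookkeeping on the exponents: one must verify that $(2^k-2^{i+1}+1)+2^k = 2^{k+1}-2^{i+1}+1$, which is immediate. Beyond that arithmetic check, the main algebraic content is just the Frobenius collapse producing $(xy)^{2^k}(x+y)$, so I do not anticipate any substantive obstacle.
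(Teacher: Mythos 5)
Your proof is correct. The paper states this identity as a cited result from \cite{WW} and gives no proof of its own, so there is nothing to compare against; your induction on $k$ is a valid, self-contained argument. The base case, the auxiliary factorization $x^{2^{k+1}+1}+y^{2^{k+1}+1}=\bigl(x^{2^k+1}+y^{2^k+1}\bigr)(x+y)^{2^k}+(xy)^{2^k}(x+y)$ (an instance of the Frobenius collapse of the cross terms), and the exponent bookkeeping identifying the leftover term with the $i=k$ summand all check out.
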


According to Lemma 22 and the proof of Theorem 23 in \cite{DFHR}, we can obtain the following result, which will play a significant role in proving our main result.

\begin{lem}{\rm(\cite{DFHR})}\label{lem.eq}
Let $n$ and $r$ be positive integers with $r_0=\gcd(r,\,n)$. Then
$$Q(x)=x^{2^r+1}+ax^{2^r}+bx+c,\,a,b,c\in\mathbb{F}_{2^n}$$
has either $0,\,1,\,2$ or $2^{r_0}+1$ roots in $\mathbb{F}_{2^n}$.
Specially, if $n=2m$ and $x_0,\,x_1,\,x_2\in\mu_{2^m+1}$ are three distinct roots of $Q$, then
$Q$ has $2^{r_1}+1$ distinct roots in $\mu_{2^m+1}$, where $r_1=\gcd(r_0,\,m)$.
Further, $x_0+Ax_1+(1+A)x_2\ne0$ for any $A\in\mathbb{F}_{2^{r_1}}$ and each root of $Q$ in $\mu_{2^m+1}\backslash\{x_0,\,x_1,\,x_2\}$ can be parameterized as
$$x_A=\frac{x_1x_2+Ax_0x_2+(1+A)x_0x_1}{x_0+Ax_1+(1+A)x_2}$$
for some $A\in\mathbb{F}_{2^{r_1}}\backslash \mathbb{F}_{2}$.
\end{lem}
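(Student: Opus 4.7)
The plan is to exploit the fact that once one root $x_0 \in \mathbb{F}_{2^n}$ of $Q$ is known, the substitution $x = x_0 + 1/u$ converts $Q(x) = 0$ into an affine linearized equation in $u$. Expanding $Q(x_0 + 1/u) = 0$ while using $Q(x_0) = 0$ to cancel the constant term, and then clearing denominators by multiplying through by $u^{2^r+1}$, yields
$$L(u) := (x_0^{2^r} + b)\, u^{2^r} + (x_0 + a)\, u = 1.$$
Since $L$ factors as $L(u) = u\bigl[(x_0^{2^r}+b)\, u^{2^r-1} + (x_0+a)\bigr]$, its kernel is either $\{0\}$ or has size $1 + \gcd(2^r-1, 2^n-1) = 2^{r_0}$. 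Hence the affine equation $L(u) = 1$ has either $0$ or $|\ker L|$ solutions, and adding back the root $x_0$ itself produces a total of $0, 1, 2$, or $2^{r_0}+1$ roots of $Q$ in $\mathbb{F}_{2^n}$, settling the first claim.

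For the refined count in $\mu_{2^m+1}$, I would assume two further distinct roots $x_1, x_2 \in \mu_{2^m+1}$ and set $u_i = 1/(x_i + x_0)$; then $u_1, u_2$ both solve $L(u) = 1$, so $v := u_1 + u_2 \in \ker L \setminus \{0\}$. This forces $|\ker L| = 2^{r_0}$, and a short check shows $\ker L$ is closed under $\mathbb{F}_{2^{r_0}}$-scaling, so $\ker L = \mathbb{F}_{2^{r_0}} \cdot v$. Next I would translate the unit-circle condition $x\bar x = 1$ into a second affine linearized equation in $u$: using $\bar x_0 = 1/x_0$ and $\bar u = u^{2^m}$, a direct computation gives $M(u) := u^{2^m} + x_0^2 u = x_0$. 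Writing a generic solution of $L(u) = 1$ as $u_1 + \alpha v$ with $\alpha \in \mathbb{F}_{2^{r_0}}$ and using $M(v) = 0$, one gets $M(u_1 + \alpha v) = x_0 + x_0^2 v (\alpha^{2^m} + \alpha)$, so the unit-circle condition holds exactly when $\alpha^{2^m} = \alpha$, i.e.\ $\alpha \in \mathbb{F}_{2^{r_0}} \cap \mathbb{F}_{2^m} = \mathbb{F}_{2^{r_1}}$. This yields $2^{r_1}$ roots in $\mu_{2^m+1} \setminus \{x_0\}$, for $2^{r_1}+1$ in total.

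The parameterization then drops out mechanically: substituting $u = (1+\alpha)u_1 + \alpha u_2$ into $x = x_0 + 1/u$ and simplifying in characteristic $2$ produces the stated formula $x_A$ with $A = \alpha$, and the degenerate values $A = 0, 1$ recover $x_1, x_2$. To show the denominator $x_0 + A x_1 + (1+A) x_2$ never vanishes for $A \in \mathbb{F}_{2^{r_1}}$, I would argue by contradiction: applying $t \mapsto t^{2^m}$ to this linear relation and using $x_i^{2^m} = x_i^{-1}$ and $A^{2^m} = A$ yields a companion relation $x_1 x_2 = A x_0 x_2 + (1+A) x_0 x_1$; multiplying the two relations and reducing in characteristic $2$ collapses the cross-terms and leaves $A(1+A)(x_1^2 + x_2^2) = 0$, forcing $A \in \{0,1\}$ since $x_1 \ne x_2$, which then contradicts the distinctness of $x_0, x_1, x_2$.

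The hardest step is the second paragraph: one must handle two affine linearized equations simultaneously and pin down the intersection of their kernels with the correct $\mathbb{F}_{2^{r_1}}$-scalar structure, balancing the $\mathbb{F}_{2^{r_0}}$-closedness of $\ker L$ against the Frobenius asymmetry in $M$ that enforces the further restriction to $\mathbb{F}_{2^m}$. Once this intersection is identified as $\mathbb{F}_{2^{r_1}} \cdot v$, everything else — the count, the parameterization, and the non-vanishing of the denominator — follows by routine manipulations in characteristic $2$.
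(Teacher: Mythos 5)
This lemma is stated in the paper without proof: it is imported verbatim from Lemma~22 and the proof of Theorem~23 of \cite{DFHR}, so there is no in-paper argument to compare against. Your reconstruction is correct and is essentially the method of the cited source: the substitution $x=x_0+1/u$ turns $Q(x)=0$ into the affine linearized equation $L(u)=(x_0^{2^r}+b)u^{2^r}+(x_0+a)u=1$, whose solution set is a coset of the $\mathbb{F}_{2^{r_0}}$-line $\ker L=\mathbb{F}_{2^{r_0}}\cdot v$, and the unit-circle condition becomes the second affine equation $M(u)=u^{2^m}+x_0^2u=x_0$, which (since $M(v)=0$) restricts the scalar $\alpha$ to $\mathbb{F}_{2^{r_0}}\cap\mathbb{F}_{2^m}=\mathbb{F}_{2^{r_1}}$; all of these computations check out, as does the derivation of $x_A$ from $u=u_1+\alpha v$ and the non-vanishing of the denominator (substituting $x_0=Ax_1+(1+A)x_2$ into the conjugated relation does collapse to $A(1+A)(x_1+x_2)^2=0$). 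Two small points worth tightening: in the first paragraph you should dispose explicitly of the degenerate case $x_0^{2^r}+b=x_0+a=0$, where $\ker L$ is all of $\mathbb{F}_{2^n}$ rather than of size $2^{r_0}$ --- the conclusion survives because $L(u)=1$ is then unsolvable and $x_0$ is the unique root, and in the three-root situation this case cannot occur since $L(u_1)=1$; and the phrase ``multiplying the two relations'' should be stated as substituting $x_0=Ax_1+(1+A)x_2$ into $x_1x_2=x_0\bigl(Ax_2+(1+A)x_1\bigr)$, since literally multiplying two vanishing expressions gives the trivial identity $0=0$.
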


\section{The differential spectrum of the Niho power function $F(x)=x^{s(2^m-1)+1}$ }\label{diff}

In this section, we determine the differential spectrum of  $F(x)=x^{s(2^m-1)+1}$ with $s=(2^k+1)^{-1}$ over $\fn$, where $m$ and $k$ are positive integers such that $\gcd(2^k+1,\,2^m+1)=1$ and $\gcd(k,\,m)=1$.

To determine the differential spectrum of $F(x)$, we first transfer the problem of determining the number of solutions of $F(x+1)+F(x)=b$ for any $b\in\fn$ to that of finding the number of solutions to a system of equations by using the polar representations of $x+1$ and $x$. This enables us to reduce the problem to investigate the number of solutions to a system of quadratic equations because $F(x)$ is of Niho type and $s=(2^k+1)^{-1}$. Our first main result is then obtained via a thorough discussion on whether the solutions of a four-term quadratic equation are in the unit cycle of $\fn$ or not.

\begin{thm}\label{thm.DS}
Let $m$ and $k$ be positive integers with $\gcd(2^k+1,\,2^m+1)=1$ and $s=(2^k+1)^{-1}$ denotes the multiplicative inverse modulo $2^m+1$. If $\gcd(k,\,m)=1$, then the power function $F(x)=x^{s(2^m-1)+1}$ over $\fn$ is locally-APN and its differential spectrum is given by
$$\mathbb{DS}_F=\{\omega_0=2^{2m-1}+2^{m-1}-1,\,\omega_2=2^{2m-1}-2^{m-1},\,\omega_{2^m}=1\}.$$
\end{thm}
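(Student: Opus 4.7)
The approach is to split the analysis according to whether $x$ lies in the subfield $\fm$. Because $d=s(2^m-1)+1\equiv 1\pmod{2^m-1}$, the function $F$ restricts to the identity on $\fm$, and hence $F(x)+F(x+1)=x+(x+1)=1$ for every $x\in\fm$. This contributes $2^m$ preimages of $b=1$ and gives the lower bound $\omega_{2^m}\geq 1$. The bulk of the work is to establish (i) that no $x\notin\fm$ satisfies $F(x)+F(x+1)=1$, and (ii) that every $b\in\fn\setminus\mathbb{F}_2$ has either $0$ or $2$ preimages.

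For $x\in\fn\setminus\fm$ I would use the polar decomposition $x=uv$, $x+1=u'v'$ with $u,u'\in\fm^*$ and $v,v'\in\mu_{2^m+1}\setminus\{1\}$ (automatically $v\neq v'$). Solving $uv+u'v'=1$ together with its conjugate $u/v+u'/v'=1$ gives the closed-form expressions $u=v(1+v')^2/(v+v')^2$ and $u'=v'(1+v)^2/(v+v')^2$. Using $F(x)=uv^{1-2s}$ (which follows from $x^{2^m-1}=v^{-2}$) and extracting the unique characteristic-two square root, the equation $F(x)+F(x+1)=b$ reduces to
\[\sqrt{b}\,(v+v')\;=\;v^{1-s}(1+v')+v'^{1-s}(1+v).\]
Since $\gcd(s,2^m+1)=1$, the map $w=v^s$ is a bijection on $\mu_{2^m+1}$; using $v=w^{2^k+1}$, $v^{1-s}=w^{2^k}$ and Lemma \ref{lem.theta} to expand $w^{2^k+1}+w'^{2^k+1}$, the equation rearranges into the polynomial identity
\[Q(w)\;:=\;(c+w'^{2^k})\,w^{2^k+1}+(1+w'^{2^k+1})\,w^{2^k}+w'^{2^k}(1+cw')\;=\;0,\quad c:=\sqrt{b}.\]
Direct substitution shows $Q(w')\equiv 0$ and $Q(1)=(1+c)(1+w'^{2^k+1})$; since $\gcd(2^k+1,2^m+1)=1$, the second factor vanishes on $\mu_{2^m+1}$ only at $w'=1$, so for $w'\neq 1$ we have $Q(1)=0$ if and only if $c=1$.

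When $A:=c+w'^{2^k}\neq 0$, the normalized polynomial $Q/A$ falls within the framework of Lemma \ref{lem.eq} with $r=k$; from $\gcd(k,m)=1$ one computes $r_1=\gcd(\gcd(k,2m),m)=1$, hence $Q$ has at most $2^{r_1}+1=3$ roots in $\mu_{2^m+1}$. The heart of the argument is to sharpen this to: for every $c\neq 1$ and every $w'\in\mu_{2^m+1}\setminus\{1\}$, $Q$ has at most one root in $\mu_{2^m+1}\setminus\{1,w'\}$. I would proceed by contradiction, postulating two extra distinct roots $w_1,w_2$; invoking Lemma \ref{lem.eq}'s rigid parametric description of the three-root case (which, for $r_1=1$, forbids any fourth root) together with the unit-circle relation $\bar w=1/w$ and the quadratic membership criterion of Lemma \ref{lem.eq.x^2}, I would extract a system of conjugate identities that collapses to $c=1$, contradicting the hypothesis. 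A parallel but shorter argument shows that when $c=1$ the only roots of $Q$ in $\mu_{2^m+1}$ are $1$ and $w'$, so $b=1$ acquires no preimages outside $\fm$. With both claims established, $b=1$ has exactly $2^m$ preimages and every other $b$ has $0$ or $2$; combining the identities $\sum_i\omega_i=2^{2m}$ and $\sum_i i\,\omega_i=2^{2m}$ with $\omega_{2^m}=1$ yields $\omega_2=2^{2m-1}-2^{m-1}$ and $\omega_0=2^{2m-1}+2^{m-1}-1$, and the locally-APN property follows because $\max_{b\notin\mathbb{F}_2}{\rm DDT}_F(1,b)=2$. The principal obstacle will be the contradiction step forbidding three roots of $Q$ in $\mu_{2^m+1}$ when $c\neq 1$: translating the unit-circle condition through the $w$-substitution into a form compatible with Lemma \ref{lem.eq}'s rigid parametrization is the subtle manipulation the introduction alludes to, and it is where the two coprimality hypotheses enter most essentially.
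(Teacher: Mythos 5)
Your subfield computation for $b=1$, the polar parametrization with the closed forms $u=v(1+v')^2/(v+v')^2$, $u'=v'(1+v)^2/(v+v')^2$, and the reduction via $w=v^{s}$ to the symmetric relation $Q(w)=0$ are all correct, and the final counting identities are handled properly. But there is a structural gap at the heart of the argument: $Q(w)=0$ is a \emph{single} relation in \emph{two} unknowns $w,w'\in\mu_{2^m+1}$, and the number of solutions $x\notin\fm$ of $F(x)+F(x+1)=b$ equals the number of valid \emph{pairs} $(w,w')$. Your plan bounds, for each fixed $w'$, the number of admissible $w$ (via Lemma~\ref{lem.eq} with $r=k$, $r_1=1$, sharpened to at most one). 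Even if that sharpening were established, summing over the $2^m$ possible values of $w'$ only yields ${\rm DDT}_F(1,b)\leq 2^m$, nowhere near the bound $2$ required for locally-APN-ness. Nothing in the proposal controls how many $w'$ admit a companion $w$; that is the actual content of the theorem.

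What is missing is an elimination step. Since $w,w'\in\mu_{2^m+1}$, the conjugate of $Q(w)=0$ (using $\overline{w}=w^{-1}$, $\overline{w'}=w'^{-1}$, $\overline{c}$) is a second, polynomially independent relation between $w$ and $w'$; one must combine the two to eliminate $w'$ and obtain a \emph{univariate} polynomial in $w$ whose roots in $\mu_{2^m+1}$ can then be counted. This is exactly what the paper does: the membership conditions $u_1,u_2\in\fm$ produce the two relations \eqref{z-eq1}--\eqref{z-eq2}, the first of which is linear in $z$, and substituting $z=(\overline{b}\,y^{2^k}+1)/(y^{2^k}+b)$ into the second yields the single equation \eqref{y-eq1}. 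Note also that this elimination doubles the exponent: \eqref{y-eq1} has degree $2^{2k}+1$, so Lemma~\ref{lem.eq} applies with $r=2k$ and $r_1=\gcd(2,m)$, not $r_1=1$. Consequently your expectation of an ``at most $3$ roots'' situation is too optimistic; for $m$ even the relevant polynomial can have $5$ roots in $\mu_{2^m+1}$, and the hardest part of the paper's proof is the explicit parametric computation (via the rigid three-root description in Lemma~\ref{lem.eq}) showing that at most two of them survive the side conditions defining $\Phi$. Your proposal neither reaches the correct univariate equation nor anticipates this five-root obstruction, so the ``heart of the argument'' you describe would not close as planned.
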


\begin{proof}
It suffices to calculate the number of solutions in $\fn$ of
\begin{equation}\label{D(F)}
(x+1)^{s(2^m-1)+1}+x^{s(2^m-1)+1}=b
\end{equation}
for any $b\in\fn$. We shall distinguish the cases $b=1$ and $b\ne1$ as follows.

\textbf{Case 1}: $b=1$.
If \eqref{D(F)} holds, i.e., we have
\begin{equation}\label{eq1}
(x+1)^{s(2^m-1)+1}=x^{s(2^m-1)+1}+1.
\end{equation}
Taking $2^m$-th power on both sides of \eqref{eq1} gives
\begin{equation}\label{eq2}
(x+1)^{-s(2^m-1)+2^m}=x^{-s(2^m-1)+2^m}+1.
\end{equation}
Multiplying \eqref{eq1} and \eqref{eq2} yields
$$(x+1)^{2^m+1}=x^{2^m+1}+x^{s(2^m-1)+1}+x^{-s(2^m-1)+2^m}+1,$$
which can be simplified as
$$(x^{-s(2^m-1)}+1)(x^{s(2^m-1)+1}+x^{2^m})=0.$$
Then we have either $x^{-s(2^m-1)}=1$ or $x^{2^m}(x^{(s-1)(2^m-1)}+1)=0$, which implies that $x\in\fm$ since $\gcd(s,\,2^m+1)=1$ and $\gcd(s-1,\,2^m+1)=1$ due to  $\gcd(2^k+1,\,2^m+1)=1$, $s=(2^k+1)^{-1}$ and $s-1=-2^k(2^k+1)^{-1}$. On the other hand, it can be readily verified that  $F(x+1)+F(x)=1$ if $x\in\fm$.
This proves that $F(x+1)+F(x)=1$ if and only if $x\in\fm$, i.e., ${\rm DDT}_F(1,\,1)=2^m$.

\textbf{Case 2}: $b\ne1$. Clearly, we have $x\in\fn\backslash\fm$ in this case. Let $x=u_1v_1$ and $x+1=u_2v_2$, where $u_1,\,u_2\in\fm^*$ and $v_1,\,v_2\in\mu_{2^m+1}\backslash \{1\}$. Then \eqref{D(F)} is equivalent to the system of equations
\begin{equation}\label{sys.uivi}
\begin{aligned}
  u_1v_1^{1-2s}+u_2v_2^{1-2s} &= b, \\
  u_1v_1+u_2v_2 &= 1.
\end{aligned}
\end{equation}
Eliminating the term $u_2$ from \eqref{sys.uivi} results in
$(1+v_1^{-2s}v_2^{2s})u_1v_1 = 1+bv_2^{2s},$
 i.e.,
\begin{equation}\label{eq.v1}
(v_1^{2s}+v_2^{2s})u_1v_1 = (1+bv_2^{2s})v_1^{2s}.
\end{equation}
Similarly, by eliminating the term $u_1$ from \eqref{sys.uivi}, we can obtain
\begin{equation}\label{eq.v2}
(v_1^{2s}+v_2^{2s})u_2v_2 = (1+bv_1^{2s})v_2^{2s}.
\end{equation}
We claim that $v_1^{2s}\ne v_2^{2s}$. Suppose that $v_1^{2s}=v_2^{2s}$. Then we have $v_1=v_2$ due to $\gcd(2s,\,2^m+1)=1$. Since $u_1v_1+u_2v_2=1$ and $\fm^*\bigcap\mu_{2^m+1}=\{1\}$, we have $u_1+u_2=v_1^{-1}=1$, which contradicts to $v_1\ne 1$. Hence $v_1^{2s}+v_2^{2s}\ne0$, i.e., $v_1\ne v_2$. Then by \eqref{eq.v1} and \eqref{eq.v2}, one gets
\begin{equation}\label{u1.u2}
u_1=\frac{(1+bv_2^{2s})v_1^{2s}}{(v_1^{2s}+v_2^{2s})v_1},\,\,\,
u_2=\frac{(1+bv_1^{2s})v_2^{2s}}{(v_1^{2s}+v_2^{2s})v_2},
\end{equation}
and
\begin{eqnarray*}
  u_1^{2^m}&=&\frac{(1+\bm v_2^{-2s})v_1^{-2s}}{(v_1^{-2s}+v_2^{-2s})v_1^{-1}}=\frac{(v_2^{2s}+\bm)v_1}{v_1^{2s}+v_2^{2s}}, \\
  u_2^{2^m}&=& \frac{(1+\bm v_1^{-2s})v_2^{-2s}}{(v_1^{-2s}+v_2^{-2s})v_2^{-1}}=\frac{(v_1^{2s}+\bm)v_2}{v_1^{2s}+v_2^{2s}}
\end{eqnarray*}
since $v_i^{2^m}=v_i^{-1}$ for $i=1,2$, where $\bm=b^{2^m}$.
Due to the fact $u_1,\,u_2\in\fm^*$, we then have
\begin{equation}\label{sys.v1v2}
\begin{aligned}
bv_1^{2s}v_2^{2s}+ v_1^{2s}+v_1^{2}v_2^{2s}+\bm v_1^{2}&= 0, \\
  bv_1^{2s}v_2^{2s}+ v_1^{2s}v_2^2+v_2^{2s}+\bm v_2^{2}&= 0.
\end{aligned}
\end{equation}
From \eqref{u1.u2} one can see that $u_1$ and $u_2$ can be determined by $v_1$ and $v_2$. Therefore we only need to determine the pairs $(v_1,\,v_2)$ with distinct $v_1,\,v_2\in\mu_{2^m+1}\backslash \{1\}$ such that \eqref{sys.v1v2} holds.

Again by $\gcd(2s,\,2^m+1)=1$, one can conclude that there exist unique $y\in\mu_{2^m+1}\backslash \{1\}$ and $z\in\mu_{2^m+1}\backslash \{1\}$ such that $v_1^{2s}=y$ and $v_2^{2s}=z$ respectively. Clearly, we have $y\ne z$ since $v_1\ne v_2$. Then \eqref{sys.v1v2} can be rewritten as
\begin{equation}\label{sys.z}
\begin{aligned}
(z+\bm) y^{2^k+1}+(bz+ 1)y&= 0, \\
(y+\bm) z^{2^k+1}+(by+ 1)z&= 0.
\end{aligned}
\end{equation}
Note that $yz\ne0$. Then \eqref{sys.z} is equivalent to
\begin{eqnarray}
  (y^{2^k}+b)z+\bm y^{2^k}+ 1&=& 0, \label{z-eq1}\\
  (y+\bm) z^{2^k}+by+ 1&=& 0. \label{z-eq2}
\end{eqnarray}
We claim that $y^{2^k}+b\ne0$. Otherwise, substituting $b=y^{2^k}$ into \eqref{z-eq2} yields
\begin{equation}\label{z-eq2.1}
(y+y^{-2^k})z^{2^k}=y^{2^k+1}+1.
\end{equation}
Since $\gcd(2^k+1,\,2^m+1)=1$ and $y\in\mu_{2^m+1}\backslash \{1\}$, we have $y^{2^k+1}\ne1$, which implies $y+y^{-2^k}=y^{-2^k}(y^{2^k+1}+1)\ne0$. Dividing both sides of \eqref{z-eq2.1} by $y+y^{-2^k}$ gives $z^{2^k}=y^{2^k}$, a contradiction with $y\ne z$. Hence $y^{2^k}+b\ne0$.
Then \eqref{z-eq1} indicates $z=(\bm y^{2^k}+ 1)/(y^{2^k}+b)$. Substituting it into \eqref{z-eq2}, we can obtain
\begin{equation}\label{y-eq}
(y+\bm)(\bm y^{2^k}+ 1)^{2^k}+(by+ 1)(y^{2^k}+b)^{2^k}=0,
\end{equation}
that is,
\begin{equation}\label{y-eq1}
(b+\bm^{2^k})y^{2^{2k}+1}+(\bm^{2^k+1}+1) y^{2^{2k}}+ (b^{2^k+1}+1)y+b^{2^k}+\bm= 0.
\end{equation}
Recall that $x=u_1v_1$, $u_1=[(1+bv_2^{2s})v_1^{2s}]/[(v_1^{2s}+v_2^{2s})v_1]$, $v_1^{2s}=y$, $v_2^{2s}=z$ and $z=(\bm y^{2^k}+ 1)/(y^{2^k}+b)$.
This indicates that $x$ is uniquely determined by $y$. Therefore, if we define
\begin{eqnarray}\label{eq-phi}
\Phi=\{y\in\mu_{2^m+1}: y \ne 1, y^{2^k}+b\ne0, y\ne \frac{\bm y^{2^k}+ 1}{y^{2^k}+b} {\rm \;and}\; \eqref{y-eq1} {\rm \;holds}\},
\end{eqnarray}
we then immediately have ${\rm DDT}_F(1,\,b)\leq|\Phi|$ when $b\ne 1$.

To determine the cardinality of $\Phi$, we first show that $z=(\bm y^{2^k}+ 1)/(y^{2^k}+b)$ is also in $\Phi$ if $y\in\Phi$. Note that $y+\bm\ne0$. Otherwise, we have $b\in\mu_{2^m+1}\backslash\{1\}$ and $z=\bm(y^{2^k}+ b)/(y^{2^k}+b)=\bm=y$, a contradiction. Then, by  \eqref{y-eq}, one gets
$z^{2^k}=( by+ 1)/(y+\bm)$. Thus, $z$ satisfies
\begin{equation}\label{eq-z}
z=\frac{\bm y^{2^k}+ 1}{y^{2^k}+b}, \;\;\; z^{2^k}=\frac{by+ 1}{y+\bm}.
\end{equation}
By \eqref{eq-phi}, to prove $z\in\Phi$, it suffices to prove 1) $z\in\mu_{2^m+1}\backslash\{1\}$; 2) $z^{2^k}+b\ne0$; 3) $z\ne (\bm z^{2^k}+ 1)/(z^{2^k}+b)$; and 4) $z$ satisfies \eqref{y-eq1}. A simple calculation gives $z\in\mu_{2^m+1}$ due to $y\in\mu_{2^m+1}$. If $z=1$, then by \eqref{eq-z} and the fact $b\ne 1$, one gets $y^{2^k}=(b+1)/(\bm+1)$ and $y=(\bm+1)/(b+1)$. This leads $y^{2^k+1}=1$ and then $y=1$ since $\gcd(2^k+1,2^m+1)=1$ and $y\in\mu_{2^m+1}$, a contradiction with $y\ne1$. Thus $z\ne 1$. This proves 1); If $z^{2^k}+b=0$, then $b\in\mu_{2^m+1}$ and further \eqref{eq-z} gives $z=\bm$ and $z^{2^k}=b$, which implies $z^{2^k+1}=1$, i.e., $z=1$ due to $\gcd(2^k+1,2^m+1)=1$ and $z\in\mu_{2^m+1}$, a contradiction with $z\ne1$. Thus $z^{2^k}+b\ne0$. This proves 2) and we can obtain $z+\bm\ne0$ in the same manner; Observe that \eqref{eq-z} also gives
\[ y^{2^k}=\frac{bz+ 1}{z+\bm}, \;\;\;y=\frac{\bm z^{2^k}+ 1}{z^{2^k}+b}\]
due to $z+\bm\ne 0$ and $z^{2^k}+b\ne0$.
This together with \eqref{eq-z} indicates that $z\ne(\bm z^{2^k}+ 1)/(z^{2^k}+b)$ since $y\ne(\bm y^{2^k}+ 1)/(y^{2^k}+b)$. Moreover, one obtains $[(\bm z^{2^k}+ 1)/(z^{2^k}+b)]^{2^k}=(bz+ 1)/(z+\bm)$, which implies that $z$ satisfies \eqref{y-eq1}. This proves 3) and 4). Hence, $z=(\bm y^{2^k}+ 1)/(y^{2^k}+b)\in\Phi$ if $y\in\Phi$.
This shows that $|\Phi|$ is even since $y\ne (\bm y^{2^k}+ 1)/(y^{2^k}+b)$.

 Next, we consider the value of $|\Phi|$ as follows:

\textbf{Case 2.1}: $b=0$. If $b=0$, then \eqref{y-eq1} is reduced to $y^{2^{2k}}+ y=0$, i.e., $y^{2^{2k}-1}=1$. Consequently, one gets $y^{\gcd(2^{2k}-1,\,2^m+1)}=y^{\gcd(2^{k}-1,\,2^m+1)}=1$ due to $y\in\mu_{2^m+1}$ and $\gcd(2^k+1,\,2^m+1)=1$.
Note that $\gcd(2^{k}-1,\,2^m+1)=1$ if $k$ is odd and $\gcd(2^{k}-1,\,2^m+1)=3$ if $k$ is even since $\gcd(k,\,m)=1$. Hence, \eqref{y-eq1} has $0$ or $2$ roots in $\mu_{2^m+1}\backslash \{1\}$, which implies $|\Phi|\leq2$ and ${\rm DDT}_F(1,\,0)\leq|\Phi|\leq2$.

\textbf{Case 2.2}: $b\ne0$.  Note that $\gcd(2^k+1, 2^m+1)=1$ if and only if one of $m/\gcd(k,\,m)$ and $k/\gcd(k,\,m)$ is even. This together with $\gcd(k,\,m)=1$, one has that $m$ and $k$ have different parity. Further, one can obtain $\gcd(2^{m+k}-1,2^{2m}-1)=1$ due to $\gcd(m+k,\,2m)=1$. This leads to $b+\bm^{2^k}\ne0$ if $b\ne0,\,1$.
Then by Lemma \ref{lem.eq}, \eqref{y-eq1} has either $0,\,1,\,2$ or $5$ solutions in $\fn$ since $\gcd(2k,2m)=2$.

In what follows, we prove that $|\Phi|\leq2$ for all $b\in\fn\backslash\mathbb{F}_2$.

Clearly, we have $|\Phi|\leq2$ if \eqref{y-eq1} has at most two distinct solutions in $\mu_{2^m+1}$. Now suppose that $y_0,\,y_1,\,y_2$ are three distinct solutions of \eqref{y-eq1} in $\mu_{2^m+1}$. Again by Lemma \ref{lem.eq}, we have that \eqref{y-eq1} has $2^{r_1}+1$ solutions in $\mu_{2^m+1}$, where $r_1=\gcd(\gcd(2k,2m),\,m)=\gcd(2,\,m)$.
If $m$ is odd, then \eqref{y-eq1} has $3$ solutions in $\mu_{2^m+1}$, which implies $|\Phi|\leq2$ since $|\Phi|$ is even.
If $m$ is even, then \eqref{y-eq1} has $5$ solutions in $\mu_{2^m+1}$ and consequently $|\Phi|\in\{0,2,4\}$. Thus, we only need to prove $|\Phi|\ne 4$ when $m$ is even.

Suppose that $|\Phi|=4$. Then by the fact that $(\bm y^{2^k}+ 1)/(y^{2^k}+b)\in \Phi$ if $y\in \Phi$, we can assume
\begin{equation}\label{Phi1}
\Phi=\{{y_1,y_2=\frac{\bm y_1^{2^k}+ 1}{y_1^{2^k}+b},y_3,y_4=\frac{\bm y_3^{2^k}+ 1}{y_3^{2^k}+b}}\}.
\end{equation}
Let $y_0\in\mu_{2^m+1}\backslash\Phi$ be the fifth solution of \eqref{y-eq1}. Then, according to \eqref{eq-phi}, $y_0$ must satisfy at least one of the following cases: 1) $y_0=1$;  2) $y_0^{2^k}=b$; 3) $y_0=(\bm y_0^{2^k}+ 1)/(y_0^{2^k}+b)$.
We first show that case 2) cannot occur. If $y_0^{2^k}=b$, then $b\in\mu_{2^m+1}$, which implies $b^{2^k+1}+1\ne 0$ due to $\gcd(2^k+1,2^m+1)=1$ and $b\ne1$. Further, \eqref{y-eq1} turns into
\[(b+b^{-2^k})y^{2^{2k}+1}+(b^{-(2^k+1)}+1) y^{2^{2k}}+ (b^{2^k+1}+1)y+b^{2^k}+b^{-1}= 0,\]
which can be factored into
\[b^{-(2^k+1)}(b^{2^k+1}+1)(by+1)(y^{2^k}+b)^{2^k}=0.\]
This leads to $by+1=0$ or $y^{2^k}+b=0$, which indicates that \eqref{y-eq1} has at most two solutions, a contradiction with the fact that \eqref{y-eq1} has $5$ solutions in $\mu_{2^m+1}$. Next we demonstrate that case 1) is covered by case 3). If $y_0=1$, then by \eqref{y-eq1} one gets $(b+1)^{2^k+1}+(\bm+1)^{2^k+1}= 0$ which implies $(b+1)/(\bm+1)=1$ since $b\ne1$, $(b+1)/(\bm+1)\in\mu_{2^m+1}$ and $\gcd(2^k+1,2^m+1)=1$, i.e., $b=\bm$. Consequently, $1=y_0=(\bm y_0^{2^k}+ 1)/(y_0^{2^k}+b)$. Therefore, if $y_0\in\mu_{2^m+1}\backslash\Phi$ is the fifth solution of \eqref{y-eq1}, then we have $y_0=(\bm y_0^{2^k}+ 1)/(y_0^{2^k}+b)$. Under this condition, we can show that $|\Phi|=4$ cannot occur by proving $y_3=y_4$ in \eqref{Phi1}.

If $y_0$, $y_1$ and $y_2$ are three distinct solutions of \eqref{y-eq1} in $\mu_{2^m+1}$, then by Lemma \ref{lem.eq} one has that
\begin{equation}\label{y3}
y_3=\frac{y_1y_2+\xi y_0 y_2+\xi^2 y_0 y_1}{y_0+\xi y_1+\xi^2 y_2}
\end{equation}
for some $\xi\in\mathbb{F}_{2^2}\backslash\mathbb{F}_{2}$.
Substituting $y_2=(\bm y_1^{2^k}+ 1)/(y_1^{2^k}+b)$ into \eqref{y3} gives
\begin{equation}\label{y3-y0y1}
y_3=\frac{\beta_1 y_1^{2^k+1}+\beta_2 y_1^{2^k}+\beta_3 y_1+\beta_4}{\alpha_1 y_1^{2^k+1}+\alpha_2 y_1^{2^k}+\alpha_3 y_1+\alpha_4},
\end{equation}
where
\begin{eqnarray}
 && \alpha_1=\xi,\,\,\alpha_2=y_0+\bm \xi^2,\,\,\alpha_3=b\xi,\,\,\alpha_4=b y_0+\xi^2, \nonumber\\
&& \beta_1=\xi^2 y_0+\bm ,\,\,\beta_2=\bm \xi y_0,\,\,\beta_3=b\xi^2 y_0+1,\,\,\beta_4=\xi y_0. \label{beta}
\end{eqnarray}
%
A direct calculation by using \eqref{y3-y0y1} gives
\begin{eqnarray}
 \frac{b y_3+1}{y_3+\bm}&=&\frac{(\alpha_1+b \beta_1) y_1^{2^k+1}+(\alpha_2+b \beta_2) y_1^{2^k}
+(\alpha_3+b \beta_3) y_1 +\alpha_4+b \beta_4}{(\bm \alpha_1+ \beta_1) y_1^{2^k+1}+(\bm \alpha_2+ \beta_2) y_1^{2^k}+(\bm\alpha_3+ \beta_3) y_1 +\bm\alpha_4+ \beta_4}, \label{y3-eq1} \\
y_3^{2^k}&=&\frac{\beta_1^{2^k} y_1^{2^{2k}+2^k}+\beta_2^{2^k} y_1^{2^{2k}}+\beta_3^{2^k} y_1^{2^k}+\beta_4^{2^k}}{\alpha_1^{2^k} y_1^{2^{2k}+2^k}+\alpha_2^{2^{k}} y_1^{2^k}+\alpha_3^{2^k} y_1^{2^k}+\alpha_4^{2^k}}. \label{y3-eq2}
\end{eqnarray}
Since $y_1$ is a solution of \eqref{y-eq1}, we have $[(b+\bm^{2^k})y_1+\bm^{2^k+1}+1]y_1^{2^{2k}}=(b^{2^k+1}+1)y_1+b^{2^k}+\bm$.  We then claim that
$(\bm^{2^k}+b)y_1+\bm^{2^k+1}+1\ne0$. Otherwise, we have $y_1=(\bm^{2^k+1}+1)/(\bm^{2^k}+b)$  since $\bm^{2^k}+b\ne0$ due to $\gcd(m+k,2m)=1$ and $b\ne 1$. Using the fact $y_1\in\mu_{2^m+1}$ gives
$$(\bm^{2^k+1}+1)^{2^m+1}+(\bm^{2^k}+b)^{2^m+1}=(b\bm)^{2^k+1}+(b\bm)^{2^k}+b\bm+1=(b\bm+1)^{2^k+1}=0,$$
i.e., $b\bm=1$, which leads to $y_1=\bm(\bm^{2^k}+b)/(\bm^{2^k}+b)=\bm$, a contradiction with $y_1+\bm\ne0$. Thus we have $(\bm^{2^k}+b)y_1+\bm^{2^k+1}+1\ne0$ and then by \eqref{y-eq1} we obtain
$$y_1^{2^{2k}}=\frac{(b^{2^k+1}+1)y_1+b^{2^k}+\bm}{(b+\bm^{2^k})y_1+\bm^{2^k+1}+1}.$$
Plugging it into \eqref{y3-eq2} leads to
\begin{equation}\label{y3-eq2.2}
y_3^{2^k}=\frac{\gamma_1 y_1^{2^k+1}+\gamma_2 y_1^{2^k}
+\gamma_3 y_1 +\gamma_4}{\delta_1 y_1^{2^k+1}+\delta_2 y_1^{2^k}+\delta_3 y_1 +\delta_4},
\end{equation}
where
\begin{eqnarray}
&&\gamma_1=f_1(\beta_1,\,\beta_3),\,\,\gamma_2=f_2(\beta_1,\,\beta_3),\,\,
\gamma_3=f_1(\beta_2,\,\beta_4),\,\,\gamma_4=f_2(\beta_2,\,\beta_4), \label{gamma} \\
&&\delta_1=f_1(\alpha_1,\,\alpha_3),\,\,\delta_2=f_2(\alpha_1,\,\alpha_3),\,\,
\delta_3=f_1(\alpha_2,\,\alpha_4),\,\,\delta_4=f_2(\alpha_2,\,\alpha_4).\nonumber
\end{eqnarray}
Here $f_1(x,\,y):=(b^{2^k+1}+1)x^{2^k}+(b+\bm^{2^k})y^{2^k}$ and $f_2(x,\,y):=(b^{2^k}+\bm)x^{2^k}+(\bm^{2^k+1}+1)y^{2^k}$.
Using the values of $\beta_1,\beta_3$ given by \eqref{beta} and $\gamma_1$ given by \eqref{gamma}, a direct calculation gives
\begin{equation}\label{gamma1}
    \gamma_1  =(b^{2^k+1}+1)(\xi^2 y_0+\bm)^{2^k}+(b+\bm^{2^k})(b\xi^2 y_0+1)^{2^k}
 =((b\bm)^{2^k}+1)(\xi y_0^{2^k}+b).
\end{equation}
The last equality follows from the fact $\xi^{2^k}=\xi^2$ due to $k$ odd since $m$ is even and $\gcd(k,\,m)=1$.
Recall that $y_0^{2^k}=(by_0+1)/(y_0+\bm)$. This together with \eqref{gamma1} yields
$$\gamma_1=\frac{(b\bm)^{2^k}+1}{y_0+\bm}(b\xi^2 y_0+b\bm+\xi)=\frac{(b\bm)^{2^k}+1}{y_0+\bm}(\alpha_1+b\beta_1).$$
Straightforward calculations give $\gamma_i=((b\bm)^{2^k}+1)(\alpha_i+b\beta_i)/(y_0+\bm)$ for $i=2,3,4$ and $\delta_i=((b\bm)^{2^k}+1)(\bm\alpha_i+\beta_i)/(y_0+\bm)$ for $i=1,2,3,4$. Then by \eqref{y3-eq1} and \eqref{y3-eq2.2}, we can deduce
$y_3^{2^k}=(b y_3+1)/(y_3+\bm)$ which implies that $y_3=y_4$, a contradiction with \eqref{Phi1}. Therefore $|\Phi|\leq2$, which shows ${\rm DDT}_F(1,\,b)\leq2$ for $b\in\fn\backslash\mathbb{F}_2$. That is to say, $F(x)$ is locally-APN.

Combining Case 1 and Case 2,  we conclude that ${\rm DDT}_F(1,b)=2^m$ if $b=1$ and ${\rm DDT}_F(1,b)\leq2$ if $b\in\fn\backslash\{1\}$. Consequently, $\omega_{2^m}=1$. Then the differential spectrum of $F(x)$ follows from the well-known identities
$$\omega_0+\omega_2+\omega_{2^m}=2^{2m}, \; 2\omega_2+2^{m}\omega_{2^m}=2^{2m},$$ which completes the proof.
\end{proof}

\begin{remark}
Note that the conjugate of $d$ and the inverse of $d$ (when it exists) are again normalized Niho exponents if $d$ is a normalized Niho exponent. Concretely, let $d=s(2^m-1)+1$ with $2\leq s\leq 2^m$, then the power function $x^{s(2^m-1)+1}$ over $\fn$ for each $s$ in $\{s, 1-s, s/(2s-1), (s-1)/(2s-1)\} ({\rm mod\;} 2^m+1)$ has the same differential spectrum since differential spectrum is invariant under the above two operations.
Computer experiments indicate that locally-APN power functions in Theorem \ref{thm.DS} cover all Niho type locally-APN power functions
for $2\leq m\leq10$.
\end{remark}

\begin{remark}
The power function $F(x)=x^{s(2^m-1)+1}$ over $\fn$ studied in this paper is a permutation if and only if $\gcd(2s-1,2^m+1)=1$, i.e., $\gcd(2^k-1,\,2^m+1)=1$, which indicates that Theorem \ref{thm.DS} produces locally-APN permutations over $\fn$ when $m$ is even and $\gcd(m,k)=1$. This may be of independent interest regarding to the big APN problem \cite{BDMW}.
\end{remark}


\section{The boomerang spectrum of the Niho power function $F(x)=x^{s(2^m-1)+1}$ }\label{boom}

In this section, we discuss the boomerang spectrum of the locally-APN function obtained in the previous section, namely,  $F(x)=x^{s(2^m-1)+1}$ with $s=(2^k+1)^{-1}$ over $\fn$, where $m$ and $k$ are positive integers such that $\gcd(2^k+1,\,2^m+1)=1$ and $\gcd(k,\,m)=1$.

An equivalent definition of the BCT of $F(x)$ was given in \cite{LQSL}, which not only allows us to compute the BCT conveniently without using the compositional inverse $F(x)^{-1}$ but also works for non-permutations.

\begin{lem}{\rm(\cite{LQSL})}\label{BCT}
For a function $F(x)$ over $\mathbb{F}_{2^n}$, the BCT entry of $F(x)$ at point $(a,\,b)\in\mathbb{F}_{2^n}^2$, denoted by ${\rm BCT}_F(a,\,b)$, is the number of solutions $(x,\,y)\in\mathbb{F}_{2^n}^2$ of the system of equations
\begin{eqnarray*}
  F(x+a)+F(y+a)&=&b,\\
  F(x)+F(y)&=& b.
\end{eqnarray*}
\end{lem}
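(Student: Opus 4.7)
The plan is to construct a bijection between the set of $x\in\mathbb{F}_{2^n}$ counted by the original definition of ${\rm BCT}_F(a,b)$ (involving $F^{-1}$) and the set of pairs $(x,y)\in\mathbb{F}_{2^n}^2$ solving the stated system. This equivalence requires $F$ to be a permutation so that the $F^{-1}$ formulation makes sense; for a non-permutation, the system-based count serves as the natural extension of the BCT notion, since the definition never invokes a compositional inverse.

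In the forward direction, I would map a solution $x$ of $F^{-1}(F(x)+b)+F^{-1}(F(x+a)+b)=a$ to the pair $(x,y)$ with $y:=F^{-1}(F(x)+b)$. The defining identity $F(y)=F(x)+b$ is exactly the first equation $F(x)+F(y)=b$; the hypothesis then rewrites as $y+a=F^{-1}(F(x+a)+b)$, and applying $F$ to both sides yields the second equation $F(x+a)+F(y+a)=b$.

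Conversely, given $(x,y)$ satisfying both equations of the system, bijectivity of $F$ applied to the first equation forces $y=F^{-1}(F(x)+b)$, and the second equation forces $y+a=F^{-1}(F(x+a)+b)$; summing these two identities recovers the original BCT condition on $x$. Since $y$ is functionally determined by $x$, the correspondence $x \leftrightarrow (x,y)$ is a genuine bijection, so the two-variable count exactly matches the original one-variable count.

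The argument is essentially algebraic rewriting, so there is no real obstacle; the only subtlety worth emphasizing is that uniqueness of $y$ given $x$ (via the first equation and injectivity of $F$) is what prevents the two-variable formulation from inflating the cardinality. The value of the reformulation lies elsewhere: it remains meaningful for arbitrary $F$, and avoiding $F^{-1}$ makes it far more convenient for the subsequent boomerang-spectrum analysis in Section~\ref{boom}, where explicit inversion of $x^{s(2^m-1)+1}$ would be cumbersome.
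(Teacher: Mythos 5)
Your argument is correct: the paper states this lemma as a cited result from \cite{LQSL} without proof, and your bijection $x \leftrightarrow (x,\,F^{-1}(F(x)+b))$ between the one-variable and two-variable solution sets is exactly the standard derivation of that equivalence, including the correct observation that for non-permutations the system-based count is taken as the definition rather than something to be proved. The only (cosmetic) slip is that you refer to $F(x)+F(y)=b$ as the ``first equation'' whereas it is listed second in the statement.
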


The boomerang spectrum of $F(x)$ can be determined  as below.

\begin{thm}\label{thm.BS}
Let $F(x)=x^{s(2^m-1)+1}$ be a power function over $\fn$, where $m$ and $k$ are positive integers with $\gcd(2^k+1,\,2^m+1)=1$ and $s=(2^k+1)^{-1}$ denotes the multiplicative inverse modulo $2^m+1$. If $\gcd(k,\,m)=1$, then the boomerang spectrum of $F(x)$ is given by
$$\mathbb{BS}_F=\big\{\nu_0=2^{2m-1},\,\nu_2=2^{2m-1}-2^{m},\,\nu_{2^m}=2^{m-1},\,\nu_{2^m+2}=2^{m-1}-1\big\}$$
if $m$ is odd; and otherwise
$$\mathbb{BS}_F=\big\{\nu_0=2^{2m-1},\,\nu_2=2^{2m-1}-2^{m},\,\nu_{2^m}=2^{m-1}-1,\,\nu_{2^m+2}=2^{m-1}\big\}.$$
\end{thm}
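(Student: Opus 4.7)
The plan is to exploit the alternative BCT characterization of Lemma~\ref{BCT} together with the observation that $F$ restricted to $\fm$ is the identity: for $x\in\fm^*$, $x^{2^m-1}=1$ gives $F(x)=(x^{2^m-1})^s x=x$. Hence $D_1F(x):=F(x+1)+F(x)=1$ for every $x\in\fm$, and Case~1 of the proof of Theorem~\ref{thm.DS} shows these are the only preimages of~$1$. Combined with Theorem~\ref{thm.DS}, the level sets $S_c:=\{x\in\fn:D_1F(x)=c\}$ therefore come in three shapes: $S_1=\fm$ of size $2^m$; for exactly $\omega_2=2^{2m-1}-2^{m-1}$ values $c\ne 1$ one has $|S_c|=2$ with $S_c=\{x_c,x_c+1\}\subseteq\fn\setminus\fm$; and $S_c=\emptyset$ otherwise.

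For $b\in\fn^*$, Lemma~\ref{BCT} gives ${\rm BCT}_F(1,b)=\sum_c|\{(x,y)\in S_c^2:F(x)+F(y)=b\}|$. The $c=1$ summand equals $|\{(x,y)\in\fm^2:x+y=b\}|$, which is $2^m$ when $b\in\fm^*$ and $0$ otherwise. Each $c\ne 1$ with $|S_c|=2$ contributes $2$ exactly when $c=b$. So ${\rm BCT}_F(1,b)\in\{0,2,2^m,2^m+2\}$, and the value $2^m+2$ is attained precisely for $b\in\fm^*\setminus\{1\}$ with ${\rm DDT}_F(1,b)=2$. Setting $A:=|\{b\in\fm^*\setminus\{1\}:{\rm DDT}_F(1,b)=2\}|$, the identities $\nu_{2^m+2}=A$, $\nu_{2^m}=2^m-1-A$ and $\nu_0+\nu_2=2^{2m}-2^m$ (together with $\omega_2$ from Theorem~\ref{thm.DS}) reduce the whole problem to determining $A$ and the value of ${\rm DDT}_F(1,0)\in\{0,2\}$.

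To compute $A$, I introduce $T:=\{x\in\fn:x+\overline{x}=1\}$, of size $2^m$, contained in $\fn\setminus\fm$ and stable under $x\mapsto x+1$. Writing $G:=D_1F$, the identity $\overline{F(x)}=F(\overline{x})$ (which holds because $F(x)^{2^m}=\overline{x}^{s(2^m-1)+1}$) gives $\overline{G(x)}=G(\overline{x})$. Hence for $x\notin\fm$ with $G(x)\in\fm$, $\overline{x}\in S_{G(x)}\subseteq\{x,x+1\}$ forces $\overline{x}=x+1$, i.e.\ $x\in T$; conversely $G(T)\subseteq\fm$ since $G(x)=F(x)+\overline{F(x)}$ on $T$. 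Moreover $G|_T$ is exactly $2$-to-$1$: if two distinct pairs $\{x,x+1\},\{y,y+1\}\subseteq T$ shared a common $G$-value $c$ (necessarily $c\ne 1$), we would have $|S_c|\ge 4$, contradicting Theorem~\ref{thm.DS}. Therefore $|G(T)|=2^{m-1}$, and since $1\notin G(T)$, we have $A=2^{m-1}$ when $0\notin G(T)$ and $A=2^{m-1}-1$ when $0\in G(T)$.

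Finally, writing $x=uv$ with $u\in\fm^*$ and $v\in\mu_{2^m+1}\setminus\{1\}$, the condition $x\in T$ pins down $u=v/(v+1)^2$, and a direct computation gives $F(x)=uv^{1-2s}$, whence $G(x)=uv^{1-2s}(v^{2s-1}+1)^2$. Thus $0\in G(T)$ iff $v^{2s-1}=1$ has a solution in $\mu_{2^m+1}\setminus\{1\}$, iff $\gcd(2s-1,2^m+1)>1$. From $s(2^k+1)\equiv 1\pmod{2^m+1}$ one gets $(2s-1)(2^k+1)\equiv -(2^k-1)\pmod{2^m+1}$, and since $\gcd(2^k+1,2^m+1)=1$ this yields $\gcd(2s-1,2^m+1)=\gcd(2^k-1,2^m+1)$. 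The parity analysis already carried out in Case~2.1 of Theorem~\ref{thm.DS} evaluates this gcd to $1$ for $m$ even and $3$ for $m$ odd, so $A=2^{m-1}$ with ${\rm DDT}_F(1,0)=0$ when $m$ is even, and $A=2^{m-1}-1$ with ${\rm DDT}_F(1,0)=2$ when $m$ is odd (witnessed by a primitive cube root of unity in $\mu_{2^m+1}$). Plugging into the identities of paragraph two yields the two claimed spectra. The main obstacle is proving that $G|_T$ is exactly $2$-to-$1$ and pushing the polar-form calculation through cleanly enough to reduce everything to the single gcd $\gcd(2s-1,2^m+1)$; after that the parity split of the answer is forced.
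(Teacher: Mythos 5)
Your proof is correct, and while its first half coincides with the paper's, the second half takes a genuinely different route. The shared part: using Lemma \ref{BCT} to write ${\rm BCT}_F(1,b)$ as a sum over the level sets $S_c$ of $\Delta(x)=F(x+1)+F(x)$, treating the fibre $S_1=\fm$ (where $F$ is the identity) separately, and reducing everything to the count $A=|\{b\in\fm\setminus\mathbb{F}_2:{\rm DDT}_F(1,b)=2\}|$ is exactly the paper's argument. Where you diverge is in computing $A$. The paper re-enters the proof of Theorem \ref{thm.DS}: it upgrades ${\rm DDT}_F(1,b)\le|\Phi|$ to the equivalence ${\rm DDT}_F(1,b)=2\Leftrightarrow|\Phi|=2$ for $b\in\fm\setminus\mathbb{F}_2$ by exhibiting an explicit solution $x=(bz+1)y/(y+z)$, then parametrizes $\mu_{2^m+1}\setminus\{1\}$ by $\fm$ (Lemma \ref{lem.u}), reduces to the linearized equation $\tau^{2^k}+\tau+\lambda_2+1=0$, and characterizes its solvability by the trace condition ${\rm Tr}_1^m(1/(b+1))=1$, which it then counts using Lemmas \ref{lem.eq.x^2}, \ref{lem.eq1} and \ref{lem.theta}. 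You instead note that $\overline{\Delta(x)}=\Delta(\overline{x})$ forces every $x\notin\fm$ with $\Delta(x)\in\fm$ to lie in $T=\{x:x+\overline{x}=1\}$, that $\Delta$ maps $T$ into $\fm$ and is exactly $2$-to-$1$ there because $|S_c|\le 2$ for $c\ne1$, so $|\Delta(T)|=2^{m-1}$; the parity split then collapses to deciding whether $0\in\Delta(T)$, i.e.\ to the single gcd $\gcd(2s-1,2^m+1)=\gcd(2^k-1,2^m+1)$, which is $3$ for $m$ odd and $1$ for $m$ even. This is shorter and more elementary (no trace computation, none of the auxiliary lemmas), and it settles ${\rm DDT}_F(1,0)$ --- which the bookkeeping for $\nu_2$ quietly requires --- in the same stroke. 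What the paper's route buys in exchange is an explicit description of \emph{which} $b\in\fm$ attain ${\rm BCT}_F(1,b)=2^m+2$, namely those with ${\rm Tr}_1^m(1/(b+1))=1$, rather than only their number.
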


\begin{proof}
According to Lemma \ref{BCT}, the value of ${\rm BCT}_F(1,\,b)$ is the number of solutions $(x,\,y)\in\fn^2$ of the following system of equations
\begin{eqnarray*}
  (x+1)^d+(y+1)^d&=&b,\\
  x^d+y^d&=& b
\end{eqnarray*}
as $b$ runs through $\fn^*$. Note that it is equivalent to
\begin{eqnarray}
  (x+1)^d+(y+1)^d&=&b,\label{BS1}\\
  \Delta(x)+\Delta(y) &=&0 ,\label{BS2}
\end{eqnarray}
where $\Delta(x)=(x+1)^d+x^d$.

Let $\Delta(x)=\Delta(y)=c$ for some $c\in\fn$. From Theorem \ref{thm.DS}, $\Delta(x)=c$ (resp. $\Delta(y)=c$) has either $0,\,2$ or $2^m$ solutions. Define
$$\Omega_i:=\{c\in\fn: |\{x\in\fn: \Delta(x)=c\}|=i\}$$
for $i=0,\,2,\,2^m$. Notice that $|\Omega_i|=\omega_i$, where $\omega_i$ is given by Theorem \ref{thm.DS}. Then, for $c\in\fn$, we consider the following three cases:

\textbf{Case 1}: $c\in\Omega_0$. That is, $\Delta(x)=c$ has no solution in $\fn$. Hence the system of equations \eqref{BS1}-\eqref{BS2} has no solution for any $b\in \fn^*$.

\textbf{Case 2}: $c\in\Omega_2$. Assume that $\Delta(x)=c$ has exactly two solutions $x_{0}$ and $x_{0}+1$ in $\fn$ for a fixed $c\in \Omega_{2}$. Then in this case \eqref{BS2} has four solutions $(x,y)=(x_0,x_0)$, $(x_0,x_0+1)$, $(x_0+1,x_0)$ and $(x_0+1,x_0+1)$. Then, we can conclude that the system of \eqref{BS1}-\eqref{BS2} has exactly two solutions $(x_{0},x_{0}+1)$ and $(x_{0}+1, x_{0})$ if $b=c\in \Omega_{2}$ and has no solution otherwise.

\textbf{Case 3}: $c\in\Omega_{2^m}$. This case happens if $c=1$. According to Case 1 of the proof of Theorem \ref{thm.DS}, $\Delta(x)=1$ has $2^m$ solutions with the solution set $\fm$. Then $\Delta(x)=\Delta(y)=c=1$ has $2^{2m}$ solutions $(x,\,y)\in\fm\times\fm$. Note that $x^d=x$ if $x\in\fm$ since $d$ is a normalized Niho exponent. This together with \eqref{BS1} implies that $b=(x+1)^d+(y+1)^d=x+y\in\fm$ which has $2^m$ solutions $(x,y)\in\{(x,x+b):x\in\fm\}$ for each $b\in\fm$. Thus, the system of \eqref{BS1}-\eqref{BS2} has exactly $2^{m}$ solutions if $b\in\fm$ and has no solution otherwise.

Combining Cases 1-3, when $b$ runs through $\fn^*$, we conclude that 1) ${\rm BCT}_F(1,\,b)=2^m+2$ if $b\in\fm^*$ and $b\in\Omega_2$; 2) ${\rm BCT}_F(1,\,b)=2^m$ if $b\in\fm^*$ and $b\notin\Omega_2$; 3) ${\rm BCT}_F(1,\,b)=2$ if $b\notin\fm^*$ and $b\in\Omega_2$; 4) ${\rm BCT}_F(1,\,b)=0$ if $b\notin\fm^*$ and $b\notin\Omega_2$.
Thus $\nu_{2^m+2}=|\fm^*\cap\Omega_2|$, $\nu_{2^m}=2^m-1-|\fm^*\cap\Omega_2|$, $\nu_{2}=|\Omega_2|-|\fm^*\cap\Omega_2|$
and $\nu_0=2^{2m}-1-(\nu_{2^m+2}+\nu_{2^m}+\nu_{2})$. Note that $|\Omega_2|=\omega_2=2^{2m-1}-2^{m-1}$.
Therefore, to complete the proof, it suffices to calculate the value of $|\fm^*\cap\Omega_2|$.

According to the definitions of ${\rm DDT}_F(1,\,b)$ and $\Omega_2$ and the fact that $1\notin\Omega_2$, we have
$$|\fm^*\cap\Omega_2|=|\{b\in\fm\backslash\mathbb{F}_2: {\rm DDT}_F(1,\,b)=2\}|.$$
Recall from Case 2 of the proof of Theorem \ref{thm.DS} that ${\rm DDT}_F(1,\,b)\leq|\Phi|\leq2$ when $b\in\fn\backslash\{1\}$, where $\Phi$ is defined by \eqref{eq-phi}.
Next we claim that ${\rm DDT}_F(1,\,b)=2$ if and only if $|\Phi|=2$ when $b\in\fm\backslash\mathbb{F}_2$. Obviously, we have $|\Phi|=2$ if ${\rm DDT}_F(1,\,b)=2$. Now suppose that $|\Phi|=2$. As we proved before, $z=(\bm y^{2^k}+ 1)/(y^{2^k}+b)\in\mu_{2^m+1}$ is also in $\Phi$ if $y\in\Phi$.
Thus, when $b\in\fm\backslash\mathbb{F}_2$, we can assume that
\[\Phi=\{y, z=\frac{b y^{2^k}+ 1}{y^{2^k}+b}\}.\]
Again by the fact $(\bm y^{2^k}+ 1)/(y^{2^k}+b)$ is also in $\Phi$ if $y\in\Phi$, for $z\in\Phi$, we have $(b z^{2^k}+ 1)/(z^{2^k}+b)\in\Phi$ which implies that $y=(b z^{2^k}+ 1)/(z^{2^k}+b)$ due to $b\ne 1$ and $|\Phi|=2$. That is, we have
\[y=\frac{b z^{2^k}+ 1}{z^{2^k}+b}, z=\frac{b y^{2^k}+ 1}{y^{2^k}+b},y^{2^k}=\frac{bz+1}{z+b}, z^{2^k}=\frac{by+1}{y+b}.\]
Let $x=(bz+1)y/(y+z)$. Note that $b\in\fm$ and $y,z\in\mu_{2^m+1}$. Then we have
\begin{eqnarray*}
  x^{2^m-1}=\Big(\frac{(bz+1)y}{y+z}\Big)^{2^m-1}=\frac{z+b}{(bz+1)y}=\frac{1}{y^{2^k+1}}=y^{-s^{-1}},\\ 
(x+1)^{2^m-1}=\Big(\frac{(by+1)z}{y+z}\Big)^{2^m-1}=\frac{y+b}{(by+1)z}=\frac{1}{z^{2^k+1}}=z^{-s^{-1}}. 
\end{eqnarray*}
This leads to
$$(x+1)^{s(2^m-1)+1}+x^{s(2^m-1)+1}=\frac{x+1}{z}+\frac{x}{y}=\frac{by+1}{y+z}+\frac{bz+1}{y+z}=b,$$
i.e., $x=(bz+1)y/(y+z)$ is a solution of \eqref{D(F)}, which implies that ${\rm DDT}_F(1,\,b)>0$. Then, by ${\rm DDT}_F(1,\,b)$ is even and ${\rm DDT}_F(1,\,b)\le 2$, we have ${\rm DDT}_F(1,\,b)=2$. This shows that ${\rm DDT}_F(1,\,b)=2$ if and only if $|\Phi|=2$ when $b\in\fm\backslash\mathbb{F}_2$. Hence, we obtain
 $$|\fm^*\cap\Omega_2|=|\{b\in\fm\backslash\mathbb{F}_2: {\rm DDT}_F(1,\,b)=2\}|=|\{b\in\fm\backslash\mathbb{F}_2: |\Phi|=2\}|.$$

Observe that $y^{2^k}+b\ne0$ for $y\in\mu_{2^m+1}\backslash\{1\}$ and $b\in\fm\backslash\mathbb{F}_2$. Then from \eqref{eq-phi}, one knows that $|\Phi|=2$ if and only if there are two $y\in\mu_{2^m+1}\backslash\{1\}$ such that
\begin{equation}\label{sys.binFm}
\begin{aligned}
(b^{2^k}+b)y^{2^{2k}+1}+(b^{2^k+1}+1) y^{2^{2k}}+ (b^{2^k+1}+1)y+b^{2^k}+b&= 0, \\
  y^{2^k+1}+b y^{2^k}+by+1&\ne 0.
\end{aligned}
\end{equation}
Let $y=(\tau+\overline{\theta})/(\tau+\theta)$ for a fixed $\theta\in\fn\backslash\fm$ and $\tau\in\fm$. By Lemma \ref{lem.u},  \eqref{sys.binFm} becomes
\begin{eqnarray}
  (b+1)^{2^k+1}(\theta+\ta)\tau^{2^{2k}}+(b+1)^{2^k+1}(\theta+\ta)^{2^{2k}} \tau
  + \varepsilon_1&=& 0, \label{ta-eq1}\\
  (b+1)(\theta+\ta)\tau^{2^k}+(b+1)(\theta+\ta)^{2^k} \tau+\varepsilon_2&\ne& 0 \label{ta-eq2},
\end{eqnarray}
where
\begin{eqnarray*}
\varepsilon_1&=&(b^{2^k}+b)(\theta+\ta)^{2^{2k}+1}
+(b+1)^{2^k+1}(\theta^{2^{2k}}\ta+\theta\ta^{2^{2k}}),\\
\varepsilon_2&=&(\theta+\ta)^{2^{k}+1}
+(b+1)(\theta^{2^{k}}\ta+\theta\ta^{2^{k}}).
\end{eqnarray*}
Substituting $\tau$ with $(\theta+\ta)\tau$ and dividing both sides of \eqref{ta-eq1} and \eqref{ta-eq2} by $(b+1)^{2^k+1}(\theta+\ta)^{2^{2k}+1}$ and $(b+1)(\theta+\ta)^{2^{k}+1}$ respectively give
\begin{eqnarray}
  \tau^{2^{2k}}+ \tau+ \lambda_1&=& 0, \label{ta-eq3}\\
  \tau^{2^k}+ \tau+\lambda_2&\ne& 0 \label{ta-eq4},
\end{eqnarray}
where
\begin{equation}\label{lambda2}
\lambda_1=\frac{b^{2^k}+b}{(b+1)^{2^k+1}}+\frac{\theta^{2^{2k}}\ta
+\theta\ta^{2^{2k}}}{(\theta+\ta)^{2^{2k}+1}},\,\,\,
\lambda_2=\frac{1}{b+1}+\frac{\theta^{2^{k}}\ta+\theta\ta^{2^{k}}}{(\theta+\ta)^{2^{k}+1}}.
\end{equation}
Let $T=\tau^{2^k}+ \tau+\lambda_2$, then $T\in\fm$ since $\lambda_2\in\fm$. Using $\lambda_1=\lambda_2^{2^k}+\lambda_2$, \eqref{ta-eq3} becomes $T^{2^k}+T=0$, which leads to $T=1$ due to $\gcd(k,\,m)=1$ and \eqref{ta-eq4}. That is, we have
\begin{equation}\label{ta-eq5}
\tau^{2^k}+ \tau+\lambda_2+1=0.
\end{equation}
Lemma \ref{lem.eq1} states that \eqref{ta-eq5} has two solutions in $\fm$ if and only if
${\rm Tr}_1^m(\lambda_2+1)=0$, i.e.,
\begin{equation}\label{tr(1/(b+1))}
{\rm Tr}_1^m\Big(\frac{1}{b+1}\Big)={\rm Tr}_1^m\bigg(\frac{\theta^{2^{k}+1}+\ta^{2^{k}+1}}{(\theta+\ta)^{2^{k}+1}}\bigg)
\end{equation}
by using $\lambda_2$ in \eqref{lambda2}.
Next we claim that ${\rm Tr}_1^m((\theta^{2^{k}+1}+\ta^{2^{k}+1})/(\theta+\ta)^{2^{k}+1})=1$.
By Lemma \ref{lem.theta}, one gets
$$\theta^{2^{k}+1}+\ta^{2^{k}+1}=(\theta+\ta)^{2^k+1}+\sum_{i=0}^{k-1}(\theta\ta)^{2^i}
(\theta+\ta)^{2^k-2^{i+1}+1}.$$
Then we have
$$\begin{aligned}
  {\rm Tr}_1^m\bigg(\frac{\theta^{2^{k}+1}+\ta^{2^{k}+1}}{(\theta+\ta)^{2^{k}+1}}\bigg) &=
  {\rm Tr}_1^m(1)+ {\rm Tr}_1^m\bigg(\sum_{i=0}^{k-1}\Big(\frac{\theta\ta}{(\theta+\ta)^2}\Big)^{2^i}\bigg)\\
   &=m+k{\rm Tr}_1^m\bigg(\frac{\theta\ta}{(\theta+\ta)^2}\bigg)
   \\&=m+k=1,
\end{aligned}$$
where the third equality holds due to
$${\rm Tr}_1^m\bigg(\frac{\theta\ta}{(\theta+\ta)^2}\bigg)={\rm Tr}_1^m\bigg(\frac{1}{\theta/\ta+\ta/\theta}\bigg)=1$$
by using Lemma \ref{lem.eq.x^2} with the fact that $\theta/\ta $ and $\ta/\theta$ are two solutions of $x^2+(\theta/\ta+\ta/\theta)x+1=0$ in $\mu_{2^m+1}\backslash\{1\}$.
Moreover, the last equality follows from $m$ and $k$ have different parity.
Therefore, when $b\in\fm^*\cap\Omega_2$, we have $|\Phi|=2$ if and only if ${\rm Tr}_1^m(1/(b+1))=1$ by \eqref{tr(1/(b+1))}. Consequently, $|\fm^*\cap\Omega_2|=2^{m-1}$ if $m$ is even and otherwise $|\fm^*\cap\Omega_2|=2^{m-1}-1$. This completes the proof.
\end{proof}

\begin{remark}
Note that $s=2$ if one takes $k=m-1$ in Theorems \ref{thm.DS} and \ref{thm.BS}. The differential spectrum and boomerang spectrum of the Niho type power function for $s=2$ have been determined by Blondeau et al. \cite{BCC1} and Yan et al. \cite{YZZ}, respectively.
\end{remark}

\begin{remark}
Experimental results indicate that the power function $F(x)=x^{s(2^m-1)+1}$ is not locally-APN when $\gcd(2^k+1,\,2^m+1)=1$ and $\gcd(k,\,m)>1$.
In this case, the differential spectrum and boomerang spectrum of $F(x)$ can be studied in a similar way with more efforts in discussing the involved equations over finite fields.
\end{remark}

\section{Conclusion}\label{conc}
Twelve years ago, Blondeau, Canteaut, and Charpin introduced locally-APN functions in the context of the block cipher in symmetric cryptography.
 Only a few studies on these functions have been developed, and few infinite families of such functions have been exhibited.
 In this paper, we have explored the corpus of locally-APN functions over the finite field $\fn$ by studying specifically the locally-APN-ness property of the Niho type power function $F(x)=x^{s(2^m-1)+1}$ with $s=(2^k+1)^{-1}$ over $\fn$ and completely determined its differential spectrum, where $m$ and $k$ are positive integers such that $\gcd(k,\,m)=1$, $\gcd(2^k+1,2^m+1)=1$ and $(2^k+1)^{-1}$ denotes the multiplicative inverse modulo $2^m+1$. Further, we completely determined the boomerang spectrum of $F(x)$ based on its differential spectrum, which generalizes a recent result presented by Yan et al. \cite{YZZ}. Our experimental results for $2\leq m\leq10$ indicate that all Niho type locally-APN power functions over $\fn$ are covered by Theorem \ref{thm.DS}, which accentuates the relevance of this result. It is, therefore, interesting to confirm whether this is true for a general $m$. Moreover, since the notion of locally-APN-ness can be viewed as a relaxation of the one of APN-ness, it offers the possibility to discover new avenues in designing S-boxes for cryptographic uses in cryptosystems involving block ciphers.

\section*{Acknowledgments}

This work was supported by the National Key Research and Development Program of China (No. 2021YFA1000600), the National Natural Science Foundation of China (No. 62072162), the Natural Science Foundation of Hubei Province of China (No. 2021CFA079) and the Knowledge Innovation Program of Wuhan-Basic Research (No. 2022010801010319).


\begin{thebibliography}{99}

\bibitem{BS} E. Biham, A. Shamir, Differential cryptanalysis of DES-like cryptosystems, J. Cryptology, 4 (1) (1991) 3-72.

\bibitem{BCC1} C. Blondeau, A. Canteaut, P. Charpin, Differential properties of $x\mapsto x^{2^t-1}$, IEEE
Trans. Inf. Theory, 57 (12) (2011) 8127-8137.

\bibitem{BCC} C. Blondeau, A. Canteaut, P. Charpin, Differential properties of power functions, Int. J. Inf. Coding Theory, 1 (2) (2010) 149-170.

\bibitem{BP} C. Blondeau, L. Perrin, More differentially $6$-uniform power functions, Des. Codes Cryptogr., 73 (2) (2014) 487-505.

\bibitem{BC} C. Boura, A. Canteaut, On the boomerang uniformity of cryptographic Sboxes, IACR Trans. Symmetric Cryptol., 2018 (3) (2018) 290-310.

\bibitem{BDMW} K. A. Browning, J. F. Dillon, M. T. McQuistan, A. J. Wolfe, An APN permutation in dimension six, Finite fields: theory and applications, Contemp. Math., Amer. Math. Soc., 518 (2010) 33-42.

 \bibitem{Carlet-book-2021} C. Carlet.  Boolean Functions for Cryptography and Coding Theory.
 \newblock Cambridge University Press, (2021) Cambridge.

\bibitem{CCZ} C. Carlet, P. Charpin, V. Zinoviev, Codes, bent functions and permutations suitable for DES-like cryptosystems, Des. Codes Cryptogr., 15 (1998) 125-156.


\bibitem{CHPSS} C. Cid, T. Huang, T. Peyrin, Y. Sasaki, L. Song, Boomerang Connectivity Table: A new cryptanalysis tool, in EUROCRYPT 2018, Jesper Buus Nielsen and Vincent Rijmen, editors, LNCS, 10821  (2018) 683-714. Springer, Cham.


\bibitem{DFHR} H. Dobbertin, P. Felke, T. Helleseth, P. Rosendahl, Niho type cross-correlation functions via Dickson polynomials and Kloosterman sums, IEEE Trans. Inf. Theory, 52 (2) (2006) 613-627.

\bibitem{EM} S. Eddahmani, S. Mesnager, Explicit values of the DDT, the BCT, the FBCT, and the FBDT of the inverse, the gold, and the Bracken-Leander S-boxes, Cryptogr. Commun., (2022) https://doi.org/10.1007/s12095-022-00581-8.


\bibitem{HPS2} S.U. Hasan, M. Pal, P. St\u{a}nic\u{a}, The binary Gold function and its c-boomerang connectivity
    table, Cryptogr. Commun., (2022) https://doi.org/10.1007/s12095-022-00573-8.

\bibitem{HLXZT} Z. Hu, N. Li, L. Xu, X. Zeng, X. Tang, The differential spectrum and boomerang spectrum of a class of locally-APN functions, (2022), arXiv:2206.14613.

\bibitem{JLLQ} S. Jiang, K. Li, Y. Li, L. Qu, Differential and boomerang spectrums of some power permutations, Cryptogr. Commun., 14 (2022) 371-393.

\bibitem{KMCLLJ} K.H. Kim, S. Mesnager, J.H. Choe, D.N. Lee, S. Lee, M.C. Jo, On permutation quadrinomials with boomerang uniformity $4$ and the best-known nonlinearity, Des. Codes Cryptogr., 90 (2022) 1437-1461.

\bibitem{LQSL} K. Li, L. Qu, B. Sun, C. Li, New results about the boomerang uniformity of permutation polynomials, IEEE Trans. Inf. Theory, 65 (11) (2019) 7542-7553.

\bibitem{LWZT} N. Li, Y. Wu, X. Zeng, X. Tang, On the differential spectrum of a class of power functions over finite fields, arXiv:2012.04316, 2020.

\bibitem{MKCLG} S. Mesnager, K.H. Kim, J.H. Choe, D.N. Lee, D.S. Go, Solving $x+x^{2^l}+\cdots+x^{2^{ml}}=a$ over $\fnn$, Cryptogr. Commun., 12 (2020) 809-817.

\bibitem{MMM} S. Mesnager, B. Mandal, M. Msahli, Survey on recent trends towards generalized differential and boomerang uniformities, Cryptogr. Commun., 14 (2022) 691-735.

\bibitem{MTX} S. Mesnager, C. Tang, M. Xiong, On the boomerang uniformity of quadratic permutations, Des. Codes Cryptogr., 88 (10) (2020) 2233-2246.


\bibitem{Nyberg-93}
\newblock  K. Nyberg, Differentially uniform mappings for cryptography,
 \newblock {\em Work-shop on the Theory and Application of of Cryptographic Techniques}., Berlin, Heidelberg: Springer (1993).

  \bibitem{SM-NK}  K. Nyberg, L. R. Knudsen, Provable security against differential cryptanalysis, {\em Journal of Cryptology} 8(1), pp. 27-37, 1995, (extended version of the
 {\em  Proceedings of CRYPT0' 92, Lecture Notes in
Computer Science} 740  (1993)  566-574.




\bibitem{R} P. Rosendahl, Niho Type Cross-Correlation Functions and Related Equations (Ph.D. dissertation), Univ. Turku, Turku, Finland, 2004.

\bibitem{TDX} C. Tang, C. Ding, M. Xiong, Codes, differentially $\delta$-uniform functions, and $t$-designs, IEEE Trans. Inf. Theory, 66 (2020) 3691-3703.


\bibitem{TZLH} Z. Tu, X. Zeng, C. Li, T. Helleseth, A class of new permutation trinomials, Finite Fields Appl., 50 (2018) 178-195.

\bibitem{W} D. Wagner, The boomerang attack, in FSE 1999, Lars R. Knudsen, editor, LNCS, vol. 1636, pp. 156-170. Springer, Berlin, Heidelberg, March 1999.

\bibitem{WW} L. Wang, B. Wu, General constructions of permutation polynomials of the form $(x^{2^m}+ x +
\delta)^{i(2^m-1)+1} + x$ over $\fn$, Finite Fields Appl., 52 (2018) 137-155.

\bibitem{XY} M. Xiong, H. Yan, A note on the differential specturm of a differentially $4$-uniform power function, Finite Fields Appl., 48 (2017) 117-125.

\bibitem{XYY} M. Xiong, H. Yan, P. Yuan, On a conjecture of differenially $8$-uniform power functions, Des. Codes Cryptogr., 86 (2018) 1601-1621.

\bibitem{YZZ} H. Yan, Z. Zhang, Z. Li, Boomerang spectrum of a class of power functions, accepted by IWSDA 2022.

\end{thebibliography}
\end{document}